\newtheorem{lem}{Lemma}
\newtheorem{ther}{Theorem}
\newtheorem{prop}{Proposition}
\newcommand{\bm}[1]{\mbox{\boldmath{$#1$}}}
\begin{document}
%
\title{On the Proximity Factors of Lattice Reduction-Aided Decoding}
\author{Cong~Ling,~\IEEEmembership{Member,~IEEE}
\thanks{C. Ling is with the Department of Electrical and Electronic Engineering, Imperial College London,
London SW7 2AZ, United Kingdom (e-mail: cling@ieee.org).}}


\maketitle

\begin{abstract}
Lattice reduction-aided decoding features reduced decoding complexity and near-optimum performance
in multi-input multi-output communications. In this paper, a quantitative analysis of lattice
reduction-aided decoding is presented. To this aim, the proximity factors are defined to measure
the worst-case losses in distances relative to closest point search (in an infinite lattice). Upper
bounds on the proximity factors are derived, which are functions of the dimension $n$ of the
lattice alone. The study is then extended to the dual-basis reduction. It is found that the bounds
for dual basis reduction may be smaller. Reasonably good bounds are derived in many cases. The
constant bounds on proximity factors not only imply the same diversity order in fading channels,
but also relate the error probabilities of (infinite) lattice decoding and lattice reduction-aided
decoding.


\end{abstract}


\begin{keywords}
coding bounds, dual basis, lattice decoding, lattice reduction, minimum distance, multi-input
multi-output communications.
\end{keywords}

{\center{EDICS: MSP-CODR or SPC-PERF}}

%
\IEEEpeerreviewmaketitle

\section{Introduction}

The linear multi-input multi-output (MIMO) model covers a range of problems in communications, such
as code-division multiple access (CDMA), inter-symbol interference (ISI) channels, linearly
precoded orthogonal frequency-division multiplexing (OFDM), multi-antenna fading channels with or
without linear encoding, multi-antenna broadcast, and cooperative diversity
\cite{Damen,GoldenCode,Hochwald05,Yang07}. For moderate to large problem sizes, (near-)optimum
decoding for MIMO systems represents a challenging problem in communication theory. The naive
exhaustive search for maximum-likelihood (ML) decoding suffers from exponential complexity.
Suboptimum strategies such as zero-forcing (ZF) and successive interference cancelation (SIC)
sometimes incurs heavy performance loss. During the past decades, several improved strategies have
been developed, especially in the context of multiuser detection and equalization. Sequential
decoding \cite{Murugan}, branch-and-bound search \cite{Luojie04}, and semidefinite relaxation
\cite{LuoYu06} represent the state-of-the-art.





The theory of lattices is a powerful, distinctive approach to fast MIMO decoding
\cite{mow:IT,viterbo,agrell}. In digital communications, the signal constellation or codebook is
often drawn from a lattice, especially as the spectrally efficient high-order quadrature amplitude
modulation (QAM) is increasingly used. Lattice decoding exploits the structure of a lattice to
significantly reduce the decoding complexity. Decoding corresponds to the closest vector problem
(CVP) for a lattice. In general, solving the CVP may consist of two stages: lattice reduction and a
local search usually implemented by sphere decoding. While sphere decoding dramatically lowers the
decoding complexity at high signal-to-noise ratio (SNR) \cite{vikaloSD1}, it could be
computationally intensive at low to moderate SNR's. Furthermore, it is known that its complexity
grows exponentially with the system size $n$ for any fixed SNR \cite{jalden}. Thus, to further
widen the decoding speed bottleneck, some performance has to be traded for complexity, i.e., the
CVP needs to be solved approximately. The problem of solving CVP approximately was first addressed
by Babai in \cite{Babai}, which in essence applies ZF or SIC on a reduced lattice. This technique
of approximate lattice decoding is referred to as lattice-reduction-aided decoding in
communications literature \cite{yao,Windpassinger2}, where the expense associated with lattice
reduction can be shared if the channel matrix keeps constant during a frame of data.


\textit{Contributions of this paper:} This paper presents a quantitative understanding of
approximate lattice decoding. More precisely, we shall develop a systematic approach to
characterize its performance in terms of proximity factors, i.e., the worst-case loss in distances
relative to (infinite) lattice decoding, which relates the probabilities of (infinite) lattice
decoding and approximate lattice decoding. This approach is justified by the fact that lattice
decoding often ignores the boundary of finite signals to reduce decoding complexity
\cite{HeshamElGamal04,Jalden:LR-MMSE}. In this paper, the proximity factors are found to be bounded
above by a function of the dimension of the lattice alone. In other words, the output of
approximate lattice decoding is in proximity to that of lattice decoding. As an alternative to
reducing the primal basis, one can reduce the dual basis. This technique has received less
attention thus far. In this paper, we find that in some cases dual basis reduction results in a
smaller upper bound on the proximity factor. The derived bounds serve two purposes: one is to bound
the performance in its own right, the other is to give insights into approximate lattice decoding.
The bounds apply to both fixed and random channels, and hold irrespective of fading statistics.






\textit{Relations to prior works:} A quantitative analysis was first given by Yao and Wornell
\cite{yao}, who showed that the loss is not greater than 3 dB for 2-dimensional complex Gauss
reduction. Our idea of quantifying the performance losses using constant bounds in the general case
started in \cite{ling:jsac} which used the LLL reduction in differential lattice decoding. This
technique has been subsequently developed and improved in a series of conference papers
\cite{LingISTC06, LingISIT06, LingICC08}, eventually leading to the current paper.


Independently, the full receive diversity of Lenstra, Lenstra and Lov\'asz (LLL) reduction in
uncoded MIMO fading channels was shown in \cite{Taherzadeh:IT,XiaoliMa08} by using different
approaches. It was further shown in \cite{Jalden:LR-MMSE} that, remarkably, minimum mean-square
error (MMSE) based lattice-reduction aided decoding achieves the optimal diversity and multiplexing
tradeoff. We would like to point out that the diversity order derived in those papers does not
fully characterize the error performance, as the SNR gap could be arbitrary for the same diversity
order. Therefore, proximity factors offer an approach to performance analysis complementary to
\cite{Taherzadeh:IT,XiaoliMa08,Jalden:LR-MMSE} and could deepen our understanding about approximate
lattice decoding. The derivations in those papers made use of the orthogonality defect of a
lattice, which could also be used to derive upper bounds on proximity factors. However, the bounds
derived in this paper are better.

Our analysis cannot capture the boundary effect associated with lattice decoding. The limitation of
lattice decoding ignoring the boundary was pointed out in \cite{Taherzadeh:Naive}. Yet it was shown
in \cite{HeshamElGamal04, Jalden:LR-MMSE} that the limitation is due to the naive implementation of
lattice decoding. With MMSE preprocessing, the performance loss relative to ML decoding is
insignificant, and in fact lattice decoding achieves the optimum diversity and multiplexing
tradeoff \cite{HeshamElGamal04,Jalden:LR-MMSE}.


%

\textit{Organization:} The rest of the paper is organized as follows. Section II describes the
system model. Section III presents the basics of lattices and lattice reduction that are essential
to the subsequent development of the theory in this paper. In Section IV, we define the proximity
factors of lattice-reduction-aided decoding and show the main results. In Sections V and VI, we
derive the proximity factors for reducing the primal and dual basis, respectively. Section VII
presents numerical results and then ends with a discussion.

\textit{Notation}: Matrices and column vectors are denoted by upper and lowercase boldface letters
(unless otherwise stated), and the transpose, inverse, pseudoinverse of a matrix $\mathbf{B}$ by
$\mathbf{B}^T$, $\mathbf{B}^{-1}$, and $\mathbf{B}^\dagger$, respectively. For convenience, write
$\mathbf{B}^{- T}=(\mathbf{B}^{-1})^T$ and $\mathbf{B}^{\dagger T}=(\mathbf{B}^{\dagger})^T$. The
inner product in the Euclidean space between vectors $\mathbf{u}$ and $\mathbf{v}$ is defined as
$\langle \mathbf{u}, \mathbf{v}\rangle = \mathbf{u}^T\mathbf{v}$, and the Euclidean length
$\|\mathbf{u}\|=\sqrt{\langle \mathbf{u}, \mathbf{u} \rangle}$.



\section{MIMO Decoding}

We consider a real-valued system model. For complex-valued system, an equivalent real system model
with a doubled dimension is used. Reduction of complex lattices without using the equivalent real
model is a topic that has been investigated elsewhere \cite{ComplexLLL}. Let $\mathbf{x} =
(x_1,...,x_n)^T$ be the $n \times 1$ data vector, where each symbol $x_i$ is chosen from a finite
subset of the integer set $\mathbb{Z}$. With proper scaling and shifting \cite{Windpassinger2}, one
has the generic $m \times n$ MIMO system model
\begin{equation}\label{MIMOmodel}
  \mathbf{y}=\mathbf{Bx}+\mathbf{n}
\end{equation}
where $m \geq n$, $\mathbf{y}, \mathbf{n} \in\mathbb{R}^m$ denote the channel output and noise
vectors, respectively, and $\mathbf{B} \in \mathbb{R}^{m \times n}$ is the $m \times n$ full
column-rank matrix. The entries of $\mathbf{n}$ are i.i.d. normal with zero mean and variance
$\sigma^2$. The expression of matrix $\mathbf{B}$ depends on the problem at hand. For example,
$\mathbf{B}$ is the channel matrix for uncoded MIMO systems; $\mathbf{B}$ is composed of the
channel matrix and the generating matrix of the encoding lattice for space-time block codes
\cite{GoldenCode}; $\mathbf{B}$ involves the product-channel matrix for cooperative communications
\cite{Yang07}.

For such a system model, the ML decoder is given by
\begin{equation}\label{ML}
    \hat{\mathbf{x}} = \arg\min_{\mathbf{x} \in \mathcal{C}}\|\mathbf{y}-\mathbf{Bx}\|^2
\end{equation}
where $\mathcal{C}$ stands for the transmitter finite set. Note that the complexity of the standard
ML decoding that uses exhaustive search is exponential in $n$, and also increases with the alphabet
size.

There are linear and nonlinear decoders with cubic complexity. In the linear ZF strategy,
$\mathbf{y}$ is multiplied on the left by the pseudoinverse of $\mathbf{B}$, to yield the detection
rule
\begin{equation}\label{ZF}
  \mathbf{\hat{x}} = \mathcal{Q}\{\mathbf{B}^{\dagger}\mathbf{y}\}
\end{equation}
where $\mathcal{Q}(\cdot)$ denotes the quantization to the nearest integer within the signal
boundary. A well-known drawback of ZF is the effect of noise amplification when the channel matrix
$\mathbf{B}$ is ill-conditioned. By introducing decision feedback in the detection process, the
nonlinear SIC strategy has better performance. One way to do SIC is to perform the QR decomposition
$\mathbf{B} = \mathbf{QR}$, where $\mathbf{Q}$ has orthogonal columns and $\mathbf{R}$ is an upper
triangular matrix \cite{jkzhang}. Multiplying (\ref{MIMOmodel}) on the left with
$\mathbf{Q^\dagger}$ we have
\begin{equation}\label{SICmodel}
  \mathbf{y}' = \mathbf{Q^\dagger y} = \mathbf{Rx}+\mathbf{n}'.
\end{equation}
In SIC, the last symbol $x_n$ is estimated first as $\hat{x}_n = \mathcal{Q}(y'_n/r_{n,n})$. Then
the estimate is substituted to remove the interference term in $y'_{n-1}$ when $x_{n-1}$ is being
estimated. The procedure is continued until the first symbol is detected. That is, we have the
following recursion:
\begin{equation}\label{SIC}
  \hat{x}_i = \mathcal{Q}\left\{\frac{y'_i - \sum_{j=i+1}^{n}r_{i,j}\hat{x}_j}{r_{i,i}}\right\}
\end{equation}
for $i = n, n - 1, ..., 1$.


ZF and SIC may incur heavy performance loss. For example, in single-user uncoded MIMO fading
channels, ZF and SIC are only able to achieve the first-order diversity in an $n \times n$ system
\cite{prasad}, i.e., that of a single antenna, which is far below the order $n$ achieved by ML
decoding. In fact, the diversity order of SIC cannot be increased even with ordering
\cite{Jiang:BLAST07}.



The basic idea behind approximate lattice decoding is to use lattice reduction in conjunction with
traditional low-complexity decoders. With lattice reduction, the basis $\mathbf{B}$ is transformed
into a new basis consisting of roughly orthogonal vectors (this is always possible in a sense
defined later)
\begin{equation}\label{LRaided}
  \mathbf{B'}=\mathbf{BU}
\end{equation}
where $\mathbf{U}$ is a unimodular matrix, i.e., $\mathbf{U}$ contains only integer entries and the
determinant $\det\mathbf{U} = \pm 1$. Indeed, we have the equivalent channel model
\[\mathbf{y} = \mathbf{B'U}^{-1}\mathbf{x}+\mathbf{n} = \mathbf{B'x'}+\mathbf{n}, \quad
\mathbf{x'}=\mathbf{U}^{-1}\mathbf{x}.\] Then conventional decoders (ZF or SIC) are applied on the
reduced basis. For example, ZF obtains the estimation
\begin{equation}\label{LRZF}
  \mathbf{\hat{x}}' = \mathcal{Q}\left\{(\mathbf{B'})^{\dagger}\mathbf{y}\right\}.
\end{equation}
This estimate is then transformed back into $\mathbf{\hat{x}}=\mathbf{U\hat{x}'}$. Since the
equivalent channel is much more likely to be well-conditioned, the effect of noise enhancement will
be moderated. Note that due to the linear transform $\mathbf{U}^{-1}$, it is no longer easy to
control the boundary; thus it is typical to quantize to the nearest integer ignoring the signal
boundary in (\ref{LRZF}).

Both linear and nonlinear detectors can also be designed with respect to the MMSE criterion. The
extension to the MMSE criterion is straightforward by dealing with the augmented channel matrix
\cite{wubbenMMSE}\footnote{Interestingly, this formulation conveniently solves the problem of
lattice reduction in the under-determined case $m<n$.}
\begin{equation}\label{MMSE}
  \mathbf{\mathbf{B}}_\text{a} =
  \begin{bmatrix}
  \mathbf{B}\\ \sigma\mathbf{I}
  \end{bmatrix}.
\end{equation}
When applied to the augmented channel matrix, ZF and SIC are equivalent to the standard MMSE and
MMSE-SIC, respectively. In fact, MMSE is essential for infinite lattice decoding to achieve the
optimum diversity and multiplexing tradeoff for finite constellations
\cite{HeshamElGamal04,Jalden:LR-MMSE}. The proximity factors derived in this paper apply to the
MMSE criterion as well, with the understanding that ZF is replaced by MMSE, and SIC replaced by
MMSE-SIC.

\section{Lattices and Lattice Reduction}

A lattice in the $m$-dimensional embedding Euclidean space $\mathbb{R}^m$ is generated as the
integer linear combination of some set of linearly independent vectors \cite{gruber}:
\begin{equation}\label{Lattice}
    L\triangleq L(\mathbf{B})=\left\{\sum_{i=1}^{n}x_i\mathbf{b}_i\text{ }|\text{ }x_i \in \mathbb{Z}, i = 1, ..., n\right\}
\end{equation}
where $\mathbf{B} = [\mathbf{b}_1, ..., \mathbf{b}_n]$ is called the basis of $L$.

A lattice $L$ can be generated by infinitely many bases, of which one would like to select one that
is in some sense nice or reduced. In many applications, it is advantageous to have the basis
vectors as short as possible. Therefore, lattice reduction, the shortest vector problem (SVP) and
CVP are closely related problems. All bases of the lattice arise by the transformation
$\mathbf{B}'=\mathbf{BU}$, where $\mathbf{U}$ is a unimodular matrix.

The dual lattice $L^*$ of a lattice $L$ is defined as those vectors $\mathbf{u}$, such that the
inner product $\langle\mathbf{u}, \mathbf{v}\rangle \in \mathbb{Z}$, for all $\mathbf{v} \in L$
\cite{gruber}. One might define the dual basis as $\mathbf{B}^{\dagger T}$. In this paper, we
follow the definition of the dual basis $\mathbf{B}^* \triangleq \mathbf{B}^{\dagger T}\mathbf{J}$
in \cite{lagarias}, where
\[
  \mathbf{J}\triangleq
  \begin{bmatrix}
    {0} & {\cdots} & {0} & {1} \\
    {0} & {\cdots} & {1} & {0} \\
    {} & {\ddots} & {} & {} \\
    {1} & {\cdots} & {0} & {0} \
  \end{bmatrix}
\]
is the flipping matrix, i.e., it reverses the columns of $\mathbf{B}^{\dagger T}$ in the left-right
direction. Under this definition, the dual basis satisfy
\begin{equation}\label{reversedual}
  \langle\mathbf{b}_i, \mathbf{b}_k^*\rangle = \delta_{i,n-k+1}.
\end{equation}
where $\mathbf{b}_k^*$ is the $k$-th column of $\mathbf{B}^*$ and $\delta_{i,k}$ is the Kronecker
delta. We have $\det L^* = 1/\det L$ and $L^{**} = L$.

For the sake of convenience, reduction of the dual basis will be referred to as dual reduction.
Dual reduction still results in a reduced basis of the primary lattice. To see this, suppose
$\mathbf{U}^{*}$ is the unimodular matrix arising from reducing the dual basis $\mathbf{B}^*$, then
the corresponding reduced primary basis is given by
\[\mathbf{B'} = (\mathbf{B}^{\dagger T}\mathbf{J}\mathbf{U}^*)^{\dagger T}\mathbf{J} = \mathbf{B}\mathbf{J}(\mathbf{U}^*)^{\dagger T}\mathbf{J}\]
Since $\mathbf{J}(\mathbf{U}^*)^{\dagger T}\mathbf{J}$ is also a unimodular matrix, $\mathbf{B'}$
must be a basis of the primary lattice. In approximate lattice decoding, we can use
$\mathbf{J}(\mathbf{U}^*)^{\dagger T}\mathbf{J}$ as the transformation matrix in (\ref{LRaided}),
whereas anything else remains the same.

Let $\lambda(L)$ be the length of the shortest vector in $L$. The Hermite constant is defined as
\cite{gruber}
\begin{equation}\label{Hermiteconstant}
  \gamma_n \triangleq \sup_L{\frac{\lambda^2(L)}{\det^{2/n} L}},
\end{equation}
where the supremum is taken over all lattices $L$ of dimension $n$. Hermite showed that $\gamma_n$
is bounded \cite{gruber}. Mordell derived the inequality $\gamma_n^{n-2} \leq \gamma_{n-1}^{n-1}$
\cite{Mordell}.
The upper bound was improved by Blichfeldt \cite{Blich}
\begin{equation}\label{Hermiteestimp}
    \gamma_n \leq \frac{2}{\pi}\Gamma^{2/n}\left(2+\frac{n}{2}\right),
\end{equation}
which has an asymptotic value $n/\pi e$. Even better estimates are known as $n$ goes to infinity
\cite{lagarias}:
\begin{equation}\label{Hermite3}
    \frac{n}{2\pi e}+o(n) \leq \gamma_n \leq \frac{0.872n}{\pi e}+o(n), \quad n\rightarrow\infty.
\end{equation}
Moreover, the Hermite constant is known exactly for $n \leq 8$:
\begin{equation}\label{Hermite8}
\begin{split}
\gamma_1 &= 1, \gamma_2^2 = 4/3, \text{ }\gamma_3^3 = 2, \text{ }\text{ }\gamma_4^4 = 4,\\
\gamma_5^5 &= 8, \gamma_6^6 = 64/3, \gamma_7^7 = 64, \gamma_8^8 = 2^8.
\end{split}
\end{equation}



The theory of lattice reduction \cite{gruber} started with Gauss who introduced the concept of a
lattice and found short vectors in 2-dimensional lattices. Reduction for general $n$ was proposed
by Hermite. More manageable versions were given by Minkowski \cite{BK:Minkowski}, and Korkin and
Zolotarev (KZ) \cite{KZ}, which try to find short basis vectors recursively. For a
Minkowski-reduced basis $\mathbf{B}$, $\mathbf{b}_1$ is a shortest nonzero vector in $L$, and
$\mathbf{b}_i$ for $i=2,\cdots,n$ is the shortest vector in $L$ such that $[\mathbf{b}_1,\cdots
\mathbf{b}_i]$ may be extended to a basis of $L$. In KZ reduction, this is performed in the linear
space orthogonal to the vectors already found. In 1982, Lenstra, Lenstra and Lov\'asz proposed the
first polynomial time algorithm which finds a vector not longer than $2^{(n-1)/2}$ times the
shortest nonzero vector.

Both KZ and LLL reduction can be seen as possible generalizations of the Gauss reduction, and they
can be defined in terms of the Gram-Schmidt orthogonalization. One can compute the Gram-Schmidt
orthogonalization by the recursion \cite{golub}
\begin{equation}\label{GSO}
\begin{split}
  \mathbf{\hat{b}}_i &= \mathbf{b}_i - \sum_{j=1}^{i-1}{\mu_{i,j}\mathbf{\hat{b}}_j},\quad \text{for } i = 1,...,n
\end{split}
\end{equation}
where $\mu_{i,j} = \langle \mathbf{b}_i, \mathbf{\hat{b}}_j\rangle/\|\mathbf{\hat{b}}_j\|^2$. Note
that the Gram-Schmidt orthogonalization depends on the order of the basis vectors. In matrix
notation, Gram-Schmidt orthogonalization can be written as $\mathbf{B}=\mathbf{\hat{B}}{\bm
\mu}^T$, where $\mathbf{\hat{B}}=[\mathbf{\hat{b}}_1, ..., \mathbf{\hat{b}}_n]$, and $\bm \mu$ is a
lower-triangular matrix with unit diagonal elements. In relation to the QR decomposition, one has
$\mu_{j,i}=r_{i,j}/r_{i,i}$ and $\mathbf{\hat{b}}_i=r_{i,i}\cdot \mathbf{q}_i$ where $\mathbf{q}_i$
is the $i$-th column of $\mathbf{Q}$.



\subsection{KZ Reduction}
A basis $\mathbf{B}$ is said to be KZ-reduced if it satisfies the following conditions
\cite{lagarias}:

\begin{itemize}
  \item $\mathbf{b}_1$ is the shortest nonzero vector of $L$;
  \item $|\mu_{i,j}| \leq 1/2$, for $1 \leq j < i \leq n$;
  \item If $L^{(n-1)}$ denotes that orthogonal projection of $L$ on the orthogonal complement of $\mathbb{R}\mathbf{b}_1$,
   then the projections $\mathbf{b}_i - \mu_{i,1}\mathbf{b}_1$ of $\mathbf{b}_2, ..., \mathbf{b}_n$ yield a KZ basis
   of $L^{(n-1)}$. Equivalently, $\mathbf{\hat{b}}_i$ is a shortest vector of $L^{(n-i+1)}$.
\end{itemize}

Finding a KZ basis is polynomial-time equivalent to finding the shortest vector. Hence, the KZ
basis is not easy to find for lattices of high dimension. When $n=2$, KZ reduction is the same as
Gaussian reduction.


\subsection{LLL Reduction}
A basis is LLL reduced if \cite{LLL}
\begin{itemize}
  \item $|\mu_{i,j}| \leq 1/2$, for $1 \leq j < i \leq n$;
  \item $\|\mathbf{\hat{b}}_i\|^2 + \mu_{i,i-1}^2\|\mathbf{\hat{b}}_{i-1}\|^2 \geq \delta
  \|\mathbf{\hat{b}}_{i-1}\|^2$, for $1<i\leq n$.
\end{itemize}

The first clause is called size reduction, while the second is known as the Lov\'asz condition. The
parameter $\delta$ takes values in the interval $(1/4, 1]$ and it is common to choose $\delta =
3/4$ for a tradeoff between quality and complexity. It is well known that the standard LLL
algorithm requires $O(n^4)$ arithmetic operations for $n\times n$ integer bases \cite{LLL}.
Recently, we established the $O(n^3 \log n)$ average time bound for real-valued bases whose vectors
are i.i.d. standard normal \cite{LingISIT07}, which is lower than thought before. A similar result
on the average number of iterations was obtained in \cite{Jalden08}.

Define $\beta \triangleq 1/(\delta - 1/4)$. An LLL-reduced lattice has the following properties:
\begin{itemize}
  \item $\|\mathbf{b}_1\| \leq \beta^{(n-1)/2}\lambda(L)$;
  \item $\|\mathbf{b}_1\| \leq \beta^{(n-1)/4}\det^{1/n}L$;
  \item $\prod_{i=1}^n{\|\mathbf{b}_i\|} \leq \beta^{n(n-1)/4}\det L$.
\end{itemize}
These inequalities indicate in various senses that the basis vectors are short. These bounds are
tight in the worst case in the sense that there exist bases reaching the bounds.

When $n=2$ and $\delta=1$, LLL reduction coincides with Gaussian reduction.

\section{Proximity Factors}
In this section, we shall introduce an analytic tool for approximate lattice decoding. The lattice
literature is usually concerned with the approximation factor of CVP. For instance, Babai proved
that ZF and SIC on an LLL-reduced basis find the closest vector, up to a factor $1+2n(9/2)^{n/2}$
and $2^{n/2}$ (for $\delta = 3/4$), respectively [7]. However, such results do not directly
translate into how close approximate lattice decoding is to (infinite) lattice decoding in terms of
the minimum distance, which is more useful in digital communications. To determine the proximity,
we examine the decision regions of the decoders and the associated distances.

\subsection{Definition}
\begin{figure}[t]

\vspace{-0.5cm}

\centering\centerline{\epsfig{figure=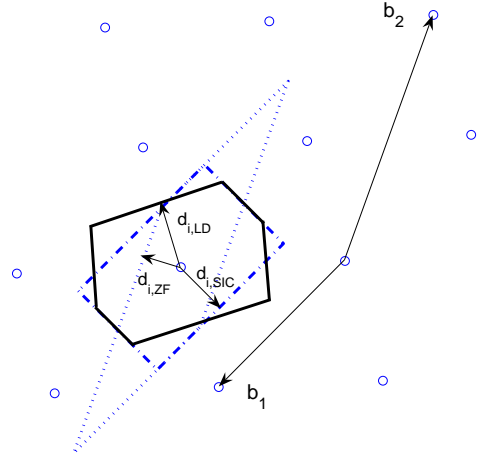,width=15cm}}

\vspace{-1.0cm}

\caption{Decision regions of ZF (dotted), SIC (dash-dotted) and (infinite) lattice decoding (solid)
and the corresponding distances in a 2-dimensional lattice.}

\vspace{-0.5cm}

\label{fig:decisionregion}
\end{figure}

We consider a fixed but arbitrary lattice. Since we assume an infinite lattice, the signal space is
geometrically uniform. Without loss of generality, let the transmitted lattice point be $\mathbf{x}
= \mathbf{0}$. The decision regions $\mathcal{R}$ of various strategies are depicted in Fig.
\ref{fig:decisionregion} for a 2-dimensional lattice.

The Voronoi cell, corresponding to the decision region of (infinite) lattice decoding, is a
polytope defined by
\[\mathcal{R}_{\text{LD}} = \{\mathbf{y}: \|\mathbf{y} - \mathbf{v}\| \geq \|\mathbf{y}\|, \quad \forall \mathbf{v} \in L\},\]
where the subscript LD stands for lattice decoding. Each facet of the Voronoi cell lies on the
perpendicular bisector of the line connecting the origin and the Voronoi neighbor $\mathbf{v}_i \in
\mathcal{N}$, where $\mathcal{N}$ represents the Voronoi neighborhood.

Let $d_{i,\text{LD}}$ be the Euclidean distance from point $\mathbf{0}$ to the $i$-th facet of
$\mathcal{R}_{\text{LD}}$. It is easy to see from Fig. \ref{fig:decisionregion} that
$d_{i,\text{LD}}=\|\mathbf{v}_i\|/2$. The decision regions of both ZF and SIC are polyhedra with
$2n$ facets and are symmetric with respect to the origin. Specifically, the ZF decision region is
just the fundamental parallelogram $\mathcal{P}(\mathbf{B})=\{\mathbf{{B}}\mathbf{a}, |a_i| \leq
1/2\}$ centered at $\mathbf{x} = \mathbf{0}$. Meanwhile, the SIC decision region is a rectangle
specified by the Gram-Schmidt vectors as $\{\mathbf{\hat{B}}\mathbf{a}, |a_i| \leq 1/2\}$
\cite{Verdu,yao,ling:jsac}.

Let $d_{i,\text{ZF}}$ and $d_{i,\text{SIC}}$ be the Euclidean distance from point $\mathbf{0}$ to
the $i$-th facet of the decision regions of ZF and SIC, respectively. We are concerned with a
spectrum of $n$ distances due to symmetry. It can be seen from Fig. \ref{fig:decisionregion} that
the distance spectrum of ZF is given by
\[d_{i,\text{ZF}} = \frac{1}{2}\|\mathbf{b}_i\|\sin \theta_i, \quad i = 1, ..., n\]
where $\theta_i$ denotes the angle between $\mathbf{b}_i$ and the linear space spanned by the other
$n - 1$ basis vectors. In Appendix \ref{appendix:theta}, we derive a formula to calculate
$\theta_i$ from $\mathbf{B}$ and its Gram-Schmidt orthogonalization, which will be used in latter
sections to bound $d_{i,\text{ZF}}$.

For SIC, the distance spectrum is given by
\[d_{i,\text{SIC}} = \frac{1}{2}\|\mathbf{\hat{b}}_i\| = \frac{1}{2}\|\mathbf{b}_i\|\sin\phi_i, \quad i = 1, ..., n\]
where $\phi_i$ is be the angle between $\mathbf{b}_i$ and the hyperplane spanned by $\mathbf{b}_1,
..., \mathbf{b}_{i-1}$. Note that $\|\mathbf{b}_i\|\sin \theta_i$ is the distance between the
vector point $\mathbf{b}_i$ and the linear space spanned by the other $n - 1$ basis vectors. Since
$\mathbf{\hat{b}}_i$ only needs to be orthogonal to $\mathbf{b}_1, ..., \mathbf{b}_{i-1}$, we must
have $\theta_i \leq \phi_i$ and hence $d_{i,\text{ZF}} \leq d_{i,\text{SIC}}$.

The minimum distance $d_{\text{LD}} \triangleq \min_i\{d_{i,\text{LD}}\} = \lambda(L)/2$ plays an
important role. We are motivated to define the {\it proximity factors} measuring the proximity
between the performance of (infinite) lattice decoding and approximate lattice decoding as follows:
\begin{equation}\label{PF}
\begin{split}
  \rho_{i,\text{ZF}} & \triangleq \sup_{\mathbf{B} \in \mathcal{B}_{\text{Reduced}}}{ \frac{d^2_{\text{LD}}}{d^2_{i,\text{ZF}}}} = \sup_{\mathbf{B} \in \mathcal{B}_{\text{Reduced}}}{ \frac{\lambda^2(L)}{\|\mathbf{b}_i\|^2\sin^2\theta_i}}\\
  \rho_{i,\text{SIC}} &\triangleq \sup_{\mathbf{B} \in \mathcal{B}_{\text{Reduced}}}{ \frac{d^2_{\text{LD}}}{d^2_{i,\text{SIC}}}} = \sup_{\mathbf{B} \in \mathcal{B}_{\text{Reduced}}}{
  \frac{\lambda^2(L)}{\|\mathbf{\hat{b}}_i\|^2}}\\
\end{split}
\end{equation}
where the supremum is taken over the set $\mathcal{B}_{\text{Reduced}}$ of basis matrices
satisfying a certain {\it reduction criterion} for any $n$-dimensional lattices $L$. Furthermore,
we define $\rho_{\text{ZF}} \triangleq \max_{1\leq i \leq n}{\rho_{i,\text{ZF}}}$ and
$\rho_{\text{SIC}} \triangleq \max_{1\leq i \leq n}{\rho_{i,\text{SIC}}}$, which quantify the
worst-case loss of approximate lattice decoding in the minimum squared Euclidean distance. It is
noteworthy that the proximity factors are not a function of any particular lattice $L$, and that
without reduction they are unbounded. However, we shall show in the latter Sections that with
lattice reduction the proximity factor is bounded by a constant that is a function of $n$ only.

\subsection{Distance Properties of Dual Reduction}



By (\ref{reversedual}), $\mathbf{b}_{n-i+1}^*$ of the dual basis is perpendicular to the
$(n-1)$-dimensional hyperplane spanned by vectors $\mathbf{b}_1, ..., \mathbf{b}_{i-1}$,
$\mathbf{b}_{i+1}, ...,\mathbf{b}_n$. Furthermore, since $(\pi/2 - \theta_i)$ is the angle between
$\mathbf{b}_i$ and $\mathbf{b}_{n-i+1}^*$, we have
\[\langle\mathbf{b}_i, \mathbf{b}_{n-i+1}^*\rangle
=\|\mathbf{b}_{n-i+1}^*\|\cdot\|\mathbf{b}_i\|\cdot\cos(\pi/2-\theta_i)=
\|\mathbf{b}_{n-i+1}^*\|\cdot\|\mathbf{b}_i\|\cdot\sin\theta_i=1.\] Therefore, the following
relation holds for dual reduction:
\begin{equation}\label{ZFlength}
  d_{i,\text{ZF}}=\frac{1}{2} \|\mathbf{b}_i\|\sin \theta_i = \frac{1}{2} \|\mathbf{b}_{n-i+1}^*\|^{-1}.
\end{equation}
It indicates that we can have large distances in ZF if the vectors of the dual basis are short.

\begin{prop}\label{prop1}
Let $\mathbf{B}=\mathbf{\hat{B}}\bm{\mu}^T$ and $\mathbf{B}^*=\mathbf{\hat{B}}^*(\bm{\mu}^*)^T$ be
the Gram-Schmidt orthogonalization of the primal basis $\mathbf{B}$ and dual basis $\mathbf{B}^*$,
respectively. Then
\begin{equation}\label{dual}
\begin{split}
    \mathbf{\hat{B}}^* &= \mathbf{\hat{B}}^{\dagger T}\mathbf{J},\\
    \bm{\mu}^* &= \mathbf{J}\bm{\mu}^{-T}\mathbf{J}.
\end{split}
\end{equation}
\end{prop}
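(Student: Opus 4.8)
The plan is to compute the Gram--Schmidt factors of $\mathbf{B}^*$ directly from those of $\mathbf{B}$ and then appeal to the uniqueness of the Gram--Schmidt decomposition. The starting observation is that $\mathbf{\hat{B}}^T\mathbf{\hat{B}}=\mathbf{D}$ is diagonal (with entries $\|\mathbf{\hat{b}}_i\|^2$), so the Gram matrix factors as $\mathbf{B}^T\mathbf{B}=\bm{\mu}\mathbf{D}\bm{\mu}^T$. First I would substitute $\mathbf{B}=\mathbf{\hat{B}}\bm{\mu}^T$ into $\mathbf{B}^\dagger=(\mathbf{B}^T\mathbf{B})^{-1}\mathbf{B}^T$ and simplify, which yields $\mathbf{B}^{\dagger T}=\mathbf{\hat{B}}\mathbf{D}^{-1}\bm{\mu}^{-1}$. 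Recognizing $\mathbf{\hat{B}}\mathbf{D}^{-1}$ as $\mathbf{\hat{B}}^{\dagger T}$ (the same computation applied to the orthogonal-column matrix $\mathbf{\hat{B}}$), this reads $\mathbf{B}^{\dagger T}=\mathbf{\hat{B}}^{\dagger T}\bm{\mu}^{-1}$. The full column rank of $\mathbf{B}$ guarantees all the inverses exist, so the rectangular case $m\geq n$ poses no difficulty here.

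Next I would insert this into the definition $\mathbf{B}^*=\mathbf{B}^{\dagger T}\mathbf{J}$ and split the flipping matrix using $\mathbf{J}^2=\mathbf{I}$, writing $\mathbf{B}^*=\mathbf{\hat{B}}^{\dagger T}\bm{\mu}^{-1}\mathbf{J}=(\mathbf{\hat{B}}^{\dagger T}\mathbf{J})(\mathbf{J}\bm{\mu}^{-1}\mathbf{J})$. The goal is then to argue that this grouping is precisely the Gram--Schmidt factorization of $\mathbf{B}^*$, i.e. that the first factor has orthogonal columns and the second is upper triangular with unit diagonal. Orthogonality of the columns of $\mathbf{\hat{B}}^{\dagger T}\mathbf{J}$ is immediate, since right-multiplication by $\mathbf{J}$ merely reverses the (already orthogonal) columns of $\mathbf{\hat{B}}^{\dagger T}$.

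The crux is the triangularity claim. Here I would use the elementary but essential fact that conjugation by $\mathbf{J}$ acts entrywise as $(\mathbf{J}\mathbf{A}\mathbf{J})_{ij}=A_{n-i+1,\,n-j+1}$, so it sends a lower-triangular matrix with unit diagonal to an upper-triangular matrix with unit diagonal. Since $\bm{\mu}^{-1}$ is lower triangular with unit diagonal (being the inverse of such a matrix), $\mathbf{J}\bm{\mu}^{-1}\mathbf{J}$ is upper triangular with unit diagonal, as required. This is the step I expect to be the main obstacle to state cleanly, because it is where the specific left--right reversing role of $\mathbf{J}$ built into the dual-basis definition (\ref{reversedual}) is doing the real work; the rest of the argument is bookkeeping.

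Finally, invoking the uniqueness of the Gram--Schmidt orthogonalization (the factorization of a full column-rank basis into an orthogonal-column matrix times an upper-triangular unit-diagonal matrix is unique), I would identify $\mathbf{\hat{B}}^*=\mathbf{\hat{B}}^{\dagger T}\mathbf{J}$ and $(\bm{\mu}^*)^T=\mathbf{J}\bm{\mu}^{-1}\mathbf{J}$. Transposing the latter and using $\mathbf{J}^T=\mathbf{J}$ gives $\bm{\mu}^*=\mathbf{J}\bm{\mu}^{-T}\mathbf{J}$, which together establish (\ref{dual}).
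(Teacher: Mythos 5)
Your proposal is correct and follows essentially the same route as the paper's own proof: compute $\mathbf{B}^{\dagger T}=\mathbf{\hat{B}}^{\dagger T}\bm{\mu}^{-1}$, insert $\mathbf{J}\mathbf{J}=\mathbf{I}$ to group $\mathbf{B}^*=(\mathbf{\hat{B}}^{\dagger T}\mathbf{J})(\mathbf{J}\bm{\mu}^{-1}\mathbf{J})$, and identify this as the Gram--Schmidt factorization. The only difference is presentational: you make explicit the uniqueness of the factorization and the entrywise action of conjugation by $\mathbf{J}$, which the paper leaves implicit.
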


\begin{proof}
By definition, we have that for matrix $\mathbf{B}$ with full column rank \[\mathbf{B}^{\dagger T}
= [(\mathbf{B}^T\mathbf{B})^{-1}\mathbf{B}^T]^T = \mathbf{B}(\mathbf{B}^T\mathbf{B})^{-T}.\]
Substituting $\mathbf{B}=\mathbf{\hat{B}}\bm{\mu}^T$, we arrive at \[\mathbf{B}^{\dagger T} =
\mathbf{\hat{B}}\bm{\mu}^T (\bm{\mu} \mathbf{\hat{B}}^T\mathbf{\hat{B}}\bm{\mu}^T)^{-T} =
\mathbf{\hat{B}} ( \mathbf{\hat{B}}^T\mathbf{\hat{B}})^{-T} \bm{\mu}^{-1} =
\mathbf{\hat{B}}^{\dagger T}\bm{\mu}^{-1}.\] Hence the reversed dual basis can be written as
\[\mathbf{B}^*=\mathbf{B}^{\dagger T}\mathbf{J} = \mathbf{\hat{B}}^{\dagger
T}\bm{\mu}^{-1}\mathbf{J}.\] As $\mathbf{J}\mathbf{J}=\mathbf{I}$, it can be rewritten into
\[\mathbf{B}^*= \mathbf{\hat{B}}^{\dagger
T}\mathbf{J}\cdot\mathbf{J}\bm{\mu}^{-1}\mathbf{J}.\] Since $\mathbf{J}\bm{\mu}^{-1}\mathbf{J}$ is
upper triangular with unit diagonal elements, and since the columns of $\mathbf{\hat{B}}^{\dagger
T}\mathbf{J}$ are orthogonal to each other, this is exactly the Gram-Schmidt orthogonalization of
$\mathbf{B}^*$. Therefore, we have $\mathbf{\hat{B}}^* = \mathbf{\hat{B}}^{\dagger T}\mathbf{J}$
and $(\bm{\mu}^*)^T = \mathbf{J}\bm{\mu}^{-1}\mathbf{J}$. As $\mathbf{J}=\mathbf{J}^T$, $\bm{\mu}^*
= \mathbf{J}\bm{\mu}^{-T}\mathbf{J}$ holds.
\end{proof}

$\mathbf{\hat{B}}^* = \mathbf{\hat{B}}^{\dagger T}\mathbf{J}$ implies
\begin{equation}\label{reverseGS1}
  \hat{\mathbf{b}}_{n-i+1}^*=\frac{\hat{\mathbf{b}}_{i}}{\|\hat{\mathbf{b}}_{i}\|^2},
  \quad i = 1,2,...,n
\end{equation}
where $\hat{\mathbf{b}}_1^*, ..., \hat{\mathbf{b}}_n^*$ are the Gram-Schmidt vectors of the dual
basis $\mathbf{B}^*$. The elegant relation (\ref{reverseGS1}) was derived earlier in
\cite{lagarias, Howgrave-Graham97}. The relation $\bm{\mu}^* = \mathbf{J}\bm{\mu}^{-T}\mathbf{J}$
is new; it corresponds to flipping $\bm{\mu}^{-1}$ with respect to the anti-diagonal.

Meanwhile, by (\ref{reverseGS1}),
\begin{equation}\label{reverseGS2}
  d_{i,\text{SIC}}=\frac{1}{2}\|\hat{\mathbf{b}}_{i}\|=\frac{1}{2}\|\hat{\mathbf{b}}_{n-i+1}^*\|^{-1}.
\end{equation}
It shows that SIC can have large distances if the reversed dual basis has short Gram-Schmidt
vectors.

\subsection{Performance Bounds for Approximate Lattice Decoding}
For each decoder, an error occurs when the noise falls outside of $\mathcal{R}$. Accordingly, given
the basis $\mathbf{B}$, the error probability for vector $\mathbf{x}$ is given by
\begin{equation}\label{SEP}
  P_e(\mathbf{B}) = P(\mathbf{\hat{x}}\neq \mathbf{0}|\mathbf{x}=\mathbf{0}) = P(\mathbf{n} \in \overline{\mathcal{R}}).
\end{equation}

To keep the results general, we write $\text{SNR}=c/\sigma^2$, where $c$ is a constant depending on
the problem. By the symmetry of the Voronoi cell (i.e., there are at least two vectors with the
shortest length,) we have the lower bound on the conditional decoding error probability of
(infinite) lattice decoding
\begin{equation}\label{SEPMLLB}
  P_{e,\text{LD}}(\text{SNR},\mathbf{B}) \geq 2{Q\left(\frac{d_{\text{LD}}}{\sigma}\right)} = 2{Q\left(\sqrt{\frac{d_{\text{LD}}^2\cdot\text{SNR}}{c}}\right)}.
\end{equation}

Meanwhile, the union bound on the conditional error probability of ZF reads
\begin{equation}\label{SEPZF}
  P_{e,\text{ZF}}(\text{SNR},\mathbf{B}) \leq 2\sum_{i=1}^{n}{Q\left(\frac{d_{i,\text{ZF}}}{\sigma}\right)}
\end{equation}
where the factor 2 is due to symmetry. The union bound for SIC admits a form similar to
(\ref{SEPZF}). Given the same basis matrix $\mathbf{B}$, the conditional error probability of
lattice reduction-aided ZF can be bounded above as
\begin{equation}\label{ZFB}
  P_{e,\text{ZF}}(\text{SNR},\mathbf{B}) \leq
  2\sum_{i=1}^{n}{Q\left(\frac{d_{\text{LD}}}{\sqrt{\rho_{i,\text{ZF}}}\sigma}\right)}=2\sum_{i=1}^{n}{Q\left(\sqrt{\frac{d_{\text{LD}}^2\cdot\text{SNR}}{c\cdot\rho_{i,\text{ZF}}}}\right)}.
\end{equation}
since $d^2_{i,\text{ZF}} \geq \rho_{i,\text{ZF}}\cdot d^2_{\text{LD}}$ by definition (\ref{PF}) and
since $Q(\cdot)$ is a decreasing function. It is worth pointing out that while the distance
$d_{\text{LD}}$ is a function of $\mathbf{B}$, $\rho_{i,\text{ZF}}$ is not.

Now, combining (\ref{SEPMLLB}) and (\ref{ZFB}), we have
\begin{equation}\label{SEP_ExactB}
  P_{e,\text{ZF}}(\text{SNR},\mathbf{B}) \leq
  \sum_{i=1}^{n}{P_{e,\text{LD}}\left(\frac{\text{SNR}}{\rho_{i,\text{ZF}}},\mathbf{B}\right)}.
\end{equation}
Since (\ref{SEP_ExactB}) holds for any $\mathbf{B}$, averaging out $\mathbf{B}$ we obtain
\begin{equation}\label{SEP_Exact}
  P_{e,\text{ZF}}(\text{SNR}) \leq
  \sum_{i=1}^{n}{P_{e,\text{LD}}\left(\frac{\text{SNR}}{\rho_{i,\text{ZF}}}\right)}
\end{equation}
for arbitrary SNR. In particular,
\begin{equation}\label{SEP_Exact2}
  P_{e,\text{ZF}}(\text{SNR}) \leq nP_{e,\text{LD}}\left(\frac{\text{SNR}}{\rho_{\text{ZF}}}\right).
\end{equation}


The relations (\ref{SEP_Exact}) and (\ref{SEP_Exact2}) hold irrespective of fading statistics, and
similar relations exist for SIC. They reveal, in a quantitative manner, that approximate lattice
decoding performs within a constant bound from lattice decoding. The mere effect on the error rate
curve is a shift from that of lattice decoding, up to a multiplicative factor $n$, which obviously
does not change the diversity order. In other words, the diversity order is the same as that of
lattice decoding. Therefore, existing results on the diversity order of lattice decoding can be
extended to approximate lattice decoding. In particular, since lattice decoding with MMSE
pre-processing achieves the optimal diversity-multiplexing tradeoff for approximately universal
codes \cite{Jalden:LR-MMSE}, approximate lattice decoding with MMSE pre-precessing also does.
Moreover, since lattice decoding achieves full receive diversity in the uncoded V-BLAST system
\cite{Taherzadeh:IT}, approximate lattice decoding also achieves full diversity. This provides an
alternative way of showing the diversity order of lattice-reduction aided decoding given in
\cite{Taherzadeh:IT, Jalden:LR-MMSE}.



\subsection{Main Results}

Table \ref{tab:comparison} shows the main results of this paper, i.e., the bounds on proximity
factors for reducing the primary basis and for reducing the dual basis. The derivations will be
given in the following two sections. Notice that the value $4/3$ is exact for Gaussian reduction,
and could be obtained by specializing the bounds for LLL or KZ reduction. In the meantime, the
bounds for LLL and KZ reduction in Table \ref{tab:comparison} are less explicit, i.e., they are
expressed in the parameter $\beta$ for LLL reduction and in the Hermite and KZ constants for KZ
reduction. To give the readers a feeling about the dependence of the bounds on $n$, Table
\ref{tab:comparison2} shows more explicit bounds for LLL and KZ reduction, where the bounds are
however less tight. For LLL reduction, the bounds on proximity factors are exponential; if
expressed in dB, they are linear with $n$. It is seen that reducing the dual basis leads to smaller
bounds in many cases. Namely, when the dual basis is reduced, LLL-ZF has a smaller base in its
exponential bound, while KZ reduction results in polynomial bounds on proximity factors for both ZF
and SIC. It is also interesting that, unlike primal basis reduction, LLL-ZF and LLL-SIC have the
same bounds on proximity factors when the dual basis is reduced.

\begin{table*}
\renewcommand{\arraystretch}{1.5}
\caption{Upper Bounds on Proximity Factors ($n$: dimension; $\beta=1/(\delta-1/4), 1/4<\delta\leq
1$; $\gamma_j$: Hermite constant; $\xi_j$: KZ constant)} \label{tab:comparison} \centering
\begin{tabular}{|c||c|c|c|c|c|c|}
\hline
       &Gauss-ZF &Gauss-SIC & LLL-ZF & LLL-SIC & KZ-ZF & KZ-SIC \\
\hline Primal Reduction&$\frac{4}{3}$ &$\frac{4}{3}$ &
$\frac{\beta}{9\beta-4}\left(\frac{9\beta}{4}\right)^{n-1} + \frac{8\beta-4}{9\beta-4}$ & $1+
\frac{\beta^n-\beta}{4(\beta-1)}$ & $1+\sum_{j=1}^{n-1}{\left(\frac{1}{3}\right)^2
  \left(\frac{3}{2}\right)^{2j}\xi_{j+1}}$ & $\xi_n$ \\
\hline Dual Reduction &$\frac{4}{3}$ &$\frac{4}{3}$ & $\beta^{n-1}$ & $\beta^{n-1}$ & $1+\frac{1}{4}\sum_{j=2}^{n}\gamma_{j}^2$ & $\max_{1\leq j\leq n}{\gamma_j^2}$\\
\hline
\end{tabular}
\end{table*}

\begin{table}
\renewcommand{\arraystretch}{1.5}
\caption{More Explicit But Less Tight Upper Bounds on Proximity Factors ($n$: dimension;
$\delta=3/4$ for LLL reduction)} \label{tab:comparison2} \centering
\begin{tabular}{|c||c|c|c|c|}
\hline
       &LLL-ZF & LLL-SIC & KZ-ZF & KZ-SIC \\
\hline Primal Reduction&$\left(\frac{9}{2}\right)^n$ & $2^{n}$ &
$\left(\frac{9}{4}\right)^{n} \cdot n^{1+\ln n}$ & $n^{1+\ln n}$ \\
\hline Dual Reduction &$2^{n}$ & $2^{n}$ & $n^3$ & $n^2$\\
\hline
\end{tabular}
\end{table}

\section{Primal Basis Reduction}\label{Sect:Primal}
With lattice reduction, the supremum in (\ref{PF}) is taken over the set of reduced bases of the
lattice $L$. Consequently, the proximity factors will be bounded. In this Section, we shall derive
explicit upper bounds for LLL and KZ reductions. The bounds are expressed in closed form and turn
out to be functions of $n$ only.

%
%
%

\subsection{LLL Reduction}
The derivation for LLL reduction will be done by adapting the techniques of the original LLL paper
\cite{LLL} and Babai \cite{Babai}. While the exponential bounds in \cite{LLL, Babai} have the best
base, in this subsection we will improve the bounds by a constant factor.

\subsubsection{SIC}
\begin{ther}\label{theorem:LLL-SIC} For LLL-SIC, the proximity factors are bounded by
\begin{equation}\label{}
\begin{split}
   \rho_{i,\text{SIC}} \leq 1 + \frac{\beta^i-\beta}{4(\beta-1)}.
\end{split}
\end{equation}
\end{ther}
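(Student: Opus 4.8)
The plan is to bound the numerator and denominator of $\rho_{i,\text{SIC}}=\sup_{\mathbf{B}}\lambda^2(L)/\|\mathbf{\hat{b}}_i\|^2$ separately, exploiting the fact that the single basis vector $\mathbf{b}_i$ already furnishes a good handle on both quantities. First I would observe that since $\mathbf{b}_i$ is a nonzero lattice vector, $\lambda(L)\le\|\mathbf{b}_i\|$, so that $\lambda^2(L)\le\|\mathbf{b}_i\|^2$. It therefore suffices to bound the deterministic ratio $\|\mathbf{b}_i\|^2/\|\mathbf{\hat{b}}_i\|^2$ for an arbitrary LLL-reduced basis; the supremum over $\mathcal{B}_{\text{Reduced}}$ then inherits the same bound.

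The second step expands $\mathbf{b}_i$ in its Gram-Schmidt decomposition. From $\mathbf{b}_i=\mathbf{\hat{b}}_i+\sum_{j=1}^{i-1}\mu_{i,j}\mathbf{\hat{b}}_j$ and the mutual orthogonality of the $\mathbf{\hat{b}}_j$, Pythagoras gives $\|\mathbf{b}_i\|^2=\|\mathbf{\hat{b}}_i\|^2+\sum_{j=1}^{i-1}\mu_{i,j}^2\|\mathbf{\hat{b}}_j\|^2$. The two LLL conditions now enter: size reduction bounds each coefficient by $\mu_{i,j}^2\le 1/4$, while the Lov\'asz condition, combined with $\mu_{i,i-1}^2\le 1/4$, yields $\|\mathbf{\hat{b}}_i\|^2\ge(\delta-1/4)\|\mathbf{\hat{b}}_{i-1}\|^2=\beta^{-1}\|\mathbf{\hat{b}}_{i-1}\|^2$. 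Iterating the latter downward gives the geometric domination $\|\mathbf{\hat{b}}_j\|^2\le\beta^{\,i-j}\|\mathbf{\hat{b}}_i\|^2$ for all $j\le i$.

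Substituting both estimates yields $\|\mathbf{b}_i\|^2\le\|\mathbf{\hat{b}}_i\|^2\bigl(1+\tfrac14\sum_{j=1}^{i-1}\beta^{\,i-j}\bigr)$, and evaluating the finite geometric series $\sum_{j=1}^{i-1}\beta^{\,i-j}=\beta+\cdots+\beta^{\,i-1}=(\beta^i-\beta)/(\beta-1)$ produces exactly $1+(\beta^i-\beta)/(4(\beta-1))$. Dividing through by $\|\mathbf{\hat{b}}_i\|^2$ and recalling $\lambda^2(L)\le\|\mathbf{b}_i\|^2$ closes the argument.

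I do not expect any single step to be a genuine obstacle; the proof is short once the right opening move is chosen. The only point requiring care is the sharpening over the naive bound: estimating $\lambda(L)$ by $\|\mathbf{b}_1\|=\|\mathbf{\hat{b}}_1\|$ together with $\|\mathbf{\hat{b}}_1\|^2\le\beta^{\,i-1}\|\mathbf{\hat{b}}_i\|^2$ would only give the cruder $\beta^{\,i-1}$. The improvement by a constant factor advertised before the theorem comes precisely from comparing $\lambda(L)$ against $\|\mathbf{b}_i\|$ rather than $\|\mathbf{b}_1\|$, so that the size-reduction slack $1/4$ multiplies the entire geometric sum; I would flag this choice explicitly so the reader sees where the gain over \cite{LLL,Babai} originates.
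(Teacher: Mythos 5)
Your proposal is correct and follows essentially the same route as the paper: the trivial estimate $\lambda(L)\le\|\mathbf{b}_i\|$, the Pythagorean expansion $\|\mathbf{b}_i\|^2=\|\mathbf{\hat{b}}_i\|^2+\sum_{j<i}\mu_{i,j}^2\|\mathbf{\hat{b}}_j\|^2$, and the geometric bound $\|\mathbf{\hat{b}}_j\|^2\le\beta^{i-j}\|\mathbf{\hat{b}}_i\|^2$ summed to $(\beta^i-\beta)/(\beta-1)$. The only cosmetic difference is that you rederive the latter inequality from the Lov\'asz and size-reduction conditions, where the paper simply cites it from the original LLL work, and your closing remark correctly pinpoints the source of the improvement over the cruder $\beta^{i-1}$ bound.
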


\begin{proof}
From \cite{LLL} one has
\begin{equation}\label{GSlength}
  \|\mathbf{\hat{b}}_j\|^2 \leq \beta^{i-j}\|\mathbf{\hat{b}}_i\|^2, \quad \text{for}\: \:1\leq j < i \leq
  n.
\end{equation}
Substituting this into the representation of $\mathbf{b}_i$ in Gram-Schmidt vectors
\begin{equation}\label{}
  \|\mathbf{b}_i\|^2 = \|\mathbf{\hat{b}}_i\|^2 +
  \sum_{j=1}^{i-1}{\mu_{i,j}^2\|\mathbf{\hat{b}}_j\|^2},
\end{equation}
we obtain
\begin{equation}\label{vectorlength}
\begin{split}
  \|\mathbf{b}_i\|^2 &\leq \left(1 + \sum_{j=1}^{i-1}{\beta^{i-j}/4}\right)\|\mathbf{\hat{b}}_i\|^2.\\
\end{split}
\end{equation}
Note that (\ref{vectorlength}) was also given in \cite{LLL} for the special case $\delta=3/4$.



Let $\mathcal{B}_{\text{LLL}}$ be the set of LLL-reduced basis matrices. Since $\lambda(L) \leq
\|\mathbf{b}_i\|$, the loss in the $i$-th squared Euclidean distance is bounded by
\begin{equation}\label{betterbound}
\begin{split}
   \rho_{i,\text{SIC}} &= \sup_{\mathbf{B}\in \mathcal{B}_{\text{LLL}}}{ \frac{\lambda^2(L)}{\|\mathbf{\hat{b}}_i\|^2}} \\
   & \leq  \left(1 + \frac{\beta^i-\beta}{4(\beta-1)}\right)\sup_{\mathbf{B}\in \mathcal{B}_{\text{LLL}}}{\frac{\lambda^2(L)}{\|\mathbf{b}_i\|^2}}\\
   &\leq 1 + \frac{\beta^i-\beta}{4(\beta-1)}.
\end{split}
\end{equation}
\end{proof}

It is easy to verify that when $\beta \geq 4/3$ (which is the case for LLL reduction)
(\ref{betterbound}) is smaller than the bound
\begin{equation}\label{expbound}
    \rho_{i,\text{SIC}} \leq \beta^{i-1}
\end{equation}
directly obtainable from \cite{LLL}. Obviously,
\begin{equation}\label{rhoSICLLL}
  \rho_{\text{SIC}} \leq 1+ \frac{\beta^n-\beta}{4(\beta-1)} \leq \beta^{n-1}.
\end{equation}

\subsubsection{ZF}

The derivation for LLL-ZF needs the following lemma. It improves Babai's lower bound on $\theta_i$
\cite{Babai} and extends it to arbitrarily values of $\delta \in (1/4, 1)$. As recognized by Babai
\cite{Babai}, the lower bound on $\theta_i$ describes a geometric feature of an LLL-reduced basis,
which is of independent interest. The proof is given in Appendix \ref{appendix:Newgamma}.

\begin{lem}\label{lem:lemma1}
If $\mathbf{B}$ is an LLL-reduced basis of lattice $L(\mathbf{B})$, then
\begin{equation}\label{lemma1}
\begin{split}
\sin^2 \theta_i \geq &\left[\frac{\beta}{9\beta-4}\left(\frac{9\beta}{4}\right)^{n-i} +
\frac{8\beta-4}{9\beta-4}\right]^{-1} \times\\
&\left[1+ \frac{\beta^i-\beta}{4(\beta-1)}\right]^{-1}
\end{split}
\end{equation}
for $1 \leq i \leq n$.
\end{lem}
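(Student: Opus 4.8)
Recall that $\|\mathbf{b}_i\|\sin\theta_i$ is the distance from $\mathbf{b}_i$ to the span of the other $n-1$ basis vectors, which equals $\|\mathbf{b}_{n-i+1}^*\|^{-1}$ by the dual-basis relation (\ref{ZFlength}). Hence $\sin^2\theta_i = 1/(\|\mathbf{b}_i\|^2\,\|\mathbf{b}_{n-i+1}^*\|^2)$, and the task reduces to upper-bounding the product $\|\mathbf{b}_i\|^2\,\|\mathbf{b}_{n-i+1}^*\|^2$. The two bracketed factors in (\ref{lemma1}) suggest exactly this split: the second factor $1+(\beta^i-\beta)/(4(\beta-1))$ is the bound on $\|\mathbf{b}_i\|^2/\|\hat{\mathbf{b}}_i\|^2$ already established in (\ref{vectorlength}), so the first factor must be the bound on $\|\mathbf{b}_{n-i+1}^*\|^2\,\|\hat{\mathbf{b}}_i\|^2$.

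First I would treat $\|\mathbf{b}_i\|^2$ directly via (\ref{vectorlength}), giving the clean factor $\|\mathbf{b}_i\|^2 \leq \left(1+\frac{\beta^i-\beta}{4(\beta-1)}\right)\|\hat{\mathbf{b}}_i\|^2$. The work then concentrates on the dual vector $\mathbf{b}_{n-i+1}^*$. Using (\ref{reverseGS1}), its Gram–Schmidt vector satisfies $\|\hat{\mathbf{b}}_{n-i+1}^*\| = \|\hat{\mathbf{b}}_i\|^{-1}$, and the analogue of (\ref{vectorlength}) for the dual basis expresses $\|\mathbf{b}_{n-i+1}^*\|^2$ as $\|\hat{\mathbf{b}}_{n-i+1}^*\|^2$ plus a combination $\sum_j (\mu_{n-i+1,j}^*)^2\|\hat{\mathbf{b}}_j^*\|^2$. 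The key inputs are that the size-reduction bound $|\mu^*_{k,j}|\leq 1/2$ holds for the dual basis (this is the content of $\bm\mu^*=\mathbf{J}\bm\mu^{-T}\mathbf{J}$ in Proposition \ref{prop1} together with the fact that dual reduction is itself an LLL reduction), and that the LLL length inequality (\ref{GSlength}) applies to the $\hat{\mathbf{b}}_j^*$. Translating those dual Gram–Schmidt lengths back into primal ones via (\ref{reverseGS1}) and reindexing $j\mapsto n-j+1$ should convert the dual geometric sum into a sum over primal indices $j>i$, producing the $(9\beta/4)^{n-i}$ growth seen in the first bracket.

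The heart of the argument, and where the constant in the base $9\beta/4$ comes from, is the estimate on a single coefficient $\mu_{n-i+1,k}^*$ combined with the length ratio. When one writes out $\|\mathbf{b}_{n-i+1}^*\|^2\,\|\hat{\mathbf{b}}_i\|^2$ and uses $\|\hat{\mathbf{b}}_i\|^2 = \|\hat{\mathbf{b}}_{n-i+1}^*\|^{-2}$, each term $(\mu^*)^2\|\hat{\mathbf{b}}_j^*\|^2\|\hat{\mathbf{b}}_{n-i+1}^*\|^{-2}$ becomes $(\mu^*)^2\cdot(\text{ratio of dual GS lengths})$. Bounding $(\mu^*)^2\leq 1/4$ and the ratio by a power of $\beta$ gives a geometric series whose ratio is $\beta$ times a factor near $9/4$; this is precisely the Babai-type estimate, and my improvement over Babai's constant should follow by keeping the $\beta/(9\beta-4)$ and $(8\beta-4)/(9\beta-4)$ coefficients exactly rather than crudely bounding the series.

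The main obstacle I anticipate is the careful bookkeeping of indices under the flip matrix $\mathbf{J}$: the dual quantity $\mathbf{b}_{n-i+1}^*$ carries reversed ordering, so its size-reduction coefficients $\mu^*_{n-i+1,j}$ couple to primal indices \emph{larger} than $i$, and one must verify that summing over $j=1,\dots,n-i$ in the dual indeed corresponds to the correct range of primal Gram–Schmidt lengths. A secondary subtlety is justifying that the same parameter $\beta$ governs both the primal size-reduction/Lov\'asz bounds and their dual counterparts; this rests on Proposition \ref{prop1} ensuring $|\mu^*_{k,j}|\leq 1/2$ and on the equivalence that reducing the dual basis yields an LLL-reduced basis. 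Once the index conventions are pinned down, the remaining computation is a finite geometric sum and should close to give exactly (\ref{lemma1}).
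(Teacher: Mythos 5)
Your overall skeleton is actually sound, and it is mathematically the same route as the paper's: writing $\sin^2\theta_i = 1/\bigl(\|\mathbf{b}_i\|^2\,\|\mathbf{b}_{n-i+1}^*\|^2\bigr)$ and peeling off $\|\mathbf{b}_i\|^2$ via (\ref{vectorlength}) is equivalent to the paper's reduction to bounding $(\mathbf{A}^{-1})_{1,1}$ in (\ref{sintheta}), because by Proposition~\ref{prop1} one has $(\mathbf{A}^{-1})_{1,1}=\mathbf{p}^T\bm{\Lambda}^{-1}\mathbf{p}=\|\mathbf{b}_{n-i+1}^*\|^2$. The genuine gap is at the one step where all the work lies: you assert that the size-reduction bound $|\mu^*_{k,j}|\leq 1/2$ holds for the dual basis because ``dual reduction is itself an LLL reduction.'' This is false under the lemma's hypothesis. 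Lemma~\ref{lem:lemma1} assumes the \emph{primal} basis is LLL-reduced; by Proposition~\ref{prop1} the dual coefficients $\bm{\mu}^*$ are the (flipped) entries of $\bm{\mu}^{-1}$, and the inverse of a size-reduced unitriangular matrix is in general \emph{not} size-reduced: its entries can be as large as $\frac{1}{3}(3/2)^{i-j}$, with equality when all $\mu_{i,j}=-1/2$. Establishing exactly this entry bound (the paper's Lemma~\ref{lem:lemma4}, proved by induction using the partitioned-matrix inverse) is the heart of the proof, and it is the ingredient missing from your proposal. The dual of an LLL-reduced basis is only \emph{effectively} LLL-reduced (Lov\'asz condition plus subdiagonal coefficients), which is far weaker than full size reduction; Appendix~\ref{appendix:dual} of the paper states this contrast explicitly.

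The error is not cosmetic, because it determines the base of the exponential, and your own write-up is internally inconsistent on this point: if $(\mu^*)^2\leq 1/4$ were available, the geometric series would have ratio $\beta$, not $9\beta/4$, and you would arrive at the dual-reduction bound (\ref{generalization-dual})--(\ref{rhoiZF-dualLLL}), which is strictly stronger than (\ref{lemma1}) and simply does not hold when only the primal basis is reduced. The $(9\beta/4)^{n-i}$ growth in the first bracket arises precisely from squaring the entry bound $\frac{1}{3}(3/2)^{j}$ on $\bm{\mu}^{-1}$ and multiplying by the factor $\beta^{j}$ coming from $\|\mathbf{\hat{b}}_{i+j}\|^{2}\geq\beta^{-j}\|\mathbf{\hat{b}}_i\|^{2}$ (that part of your argument, via (\ref{GSlength}) and (\ref{reverseGS1}), is fine). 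To close the proof you must prove the $\frac{1}{3}(3/2)^{i-j}$ bound on the entries of $\bm{\mu}^{-1}$ (equivalently of $\bm{\mu}^*$) and then sum the resulting series $\sum_{j}\frac{1}{9}\left(\frac{9\beta}{4}\right)^{j}$; without that lemma the first bracketed factor in (\ref{lemma1}) cannot be obtained.
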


It is worth mentioning that the lower bound of Lemma~\ref{lem:lemma1} is tight. The equality is
achieved when $\|\mathbf{\hat{b}}_j\|^2 = \beta^{i-j}\|\mathbf{\hat{b}}_i\|^2$ and $\mu_{i,j} =
-1/2$ for all $1\leq j < i \leq n$. This can be seen from the proof.

Applying Lemma~\ref{lem:lemma1} and the trivial inequality $\|\mathbf{b}_i\| \geq \lambda(L)$ to
\begin{equation}\label{PFZF2}
\begin{split}
  \rho_{i,\text{ZF}} &\leq  \sup_{\mathbf{B}\in \mathcal{B}_{\text{LLL}}}{\frac{\lambda^2(L)}{\|\mathbf{b}_i\|^2}\frac{1}{\sin^2\theta_i}},\\
\end{split}
\end{equation}
we obtain the following theorem:
\begin{ther}\label{theorem:LLL-ZF} For LLL-ZF, the proximity factors are bounded by
\begin{equation}\label{PFZF}
\begin{split}
  \rho_{i,\text{ZF}} &\leq  \left[\frac{\beta}{9\beta-4}\left(\frac{9\beta}{4}\right)^{n-i} + \frac{8\beta-4}{9\beta-4}\right] \left(1+\frac{1}{4} \cdot
\frac{\beta^i-\beta}{\beta-1}\right).
\end{split}
\end{equation}
\end{ther}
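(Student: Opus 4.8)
The plan is to obtain (\ref{PFZF}) directly from the definition of the proximity factor, since the genuinely geometric content has already been packaged into Lemma~\ref{lem:lemma1}. Starting from (\ref{PF}) I would write the summand as a product of two pieces,
\[
  \frac{\lambda^2(L)}{\|\mathbf{b}_i\|^2\sin^2\theta_i}
  = \frac{\lambda^2(L)}{\|\mathbf{b}_i\|^2}\cdot\frac{1}{\sin^2\theta_i},
\]
bound each piece separately over all LLL-reduced bases, and then multiply. Because the factorization and both bounds are uniform in $\mathbf{B}$, the product dominates every summand and hence the supremum, which is exactly the passage from (\ref{PFZF2}) to the sharper (\ref{PFZF}).

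For the first piece I would use the trivial inequality $\|\mathbf{b}_i\|\ge\lambda(L)$: every basis vector $\mathbf{b}_i$ is a nonzero lattice point, and $\lambda(L)$ is by definition the length of the shortest such point, so $\lambda^2(L)/\|\mathbf{b}_i\|^2\le 1$ for every $\mathbf{B}\in\mathcal{B}_{\text{LLL}}$. For the second piece I would invert the lower bound (\ref{lemma1}) of Lemma~\ref{lem:lemma1}: taking reciprocals turns the lower bound on $\sin^2\theta_i$ into the upper bound
\[
  \frac{1}{\sin^2\theta_i}
  \le \left[\frac{\beta}{9\beta-4}\left(\frac{9\beta}{4}\right)^{n-i}+\frac{8\beta-4}{9\beta-4}\right]
      \left(1+\frac{\beta^i-\beta}{4(\beta-1)}\right).
\]
Multiplying the two estimates and noting that $\frac{\beta^i-\beta}{4(\beta-1)}=\frac14\cdot\frac{\beta^i-\beta}{\beta-1}$ reproduces exactly the right-hand side of (\ref{PFZF}), which finishes the argument.

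Thus the theorem is essentially a one-line consequence once Lemma~\ref{lem:lemma1} is in hand, and the real obstacle lives one level down, in that lemma. There the task is to control $\theta_i$ — the angle between $\mathbf{b}_i$ and the span of all the \emph{other} $n-1$ basis vectors — purely in terms of LLL data. My plan for the lemma would be to express $\sin^2\theta_i$ through the Gram-Schmidt picture and then bound it using the two defining LLL properties: the length chain $\|\mathbf{\hat{b}}_j\|^2\le\beta^{i-j}\|\mathbf{\hat{b}}_i\|^2$ of (\ref{GSlength}) and size reduction $|\mu_{i,j}|\le 1/2$. The factor $1+\frac{\beta^i-\beta}{4(\beta-1)}$ is already familiar — it is exactly the coefficient in (\ref{vectorlength}) relating $\|\mathbf{b}_i\|^2$ to $\|\mathbf{\hat{b}}_i\|^2$ — so I expect it to enter through the length of $\mathbf{b}_i$ itself. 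The harder factor, carrying the base $9\beta/4$ and the exponent $n-i$, should come from accounting for the directions with index \emph{above} $i$, since the hyperplane defining $\theta_i$ (unlike the one defining $\phi_i$ for SIC, which uses only $\mathbf{b}_1,\dots,\mathbf{b}_{i-1}$) also involves the later basis vectors; the count of $n-i$ such indices is what I expect to produce the exponent. I would anticipate this part to mirror Babai's estimate but with the explicit $\delta$-dependence retained through $\beta$, and I expect the tightness remark following the lemma to correspond to saturating both LLL constraints, namely $\|\mathbf{\hat{b}}_j\|^2=\beta^{i-j}\|\mathbf{\hat{b}}_i\|^2$ and $\mu_{i,j}=-1/2$.
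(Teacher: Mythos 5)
Your proposal is correct and matches the paper's own proof: the paper likewise obtains (\ref{PFZF}) by applying the trivial inequality $\|\mathbf{b}_i\|\geq\lambda(L)$ together with Lemma~\ref{lem:lemma1} to the definition via (\ref{PFZF2}), and your sketch of the lemma's proof (splitting off the (\ref{vectorlength}) factor, bounding the contribution of the indices above $i$, and the stated saturation conditions for tightness) also mirrors Appendix~\ref{appendix:Newgamma}. Nothing is missing.
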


\vspace{0.5cm}

Since the upper bound is maximized when $i = 1$, we have
\begin{equation}\label{rhoZFLLL}
  \rho_{\text{ZF}} \leq \frac{\beta}{9\beta-4}\left(\frac{9\beta}{4}\right)^{n-1} + \frac{8\beta-4}{9\beta-4}.
\end{equation}
This is better than the bound $\rho_{\text{ZF}} \leq \left(\frac{9\beta}{4}\right)^{n-1}$ obtained
previously in \cite{LingISTC06}.

\subsection{KZ Reduction}

\subsubsection{SIC}
To derive the proximity factors for SIC with KZ reduction, we make use of the KZ constant defined
in \cite{schnorr87} as
\begin{equation}\label{KZconstant}
  \xi_n \triangleq \sup_{\mathbf{B}\in \mathcal{B}_{\text{KZ}}}{\frac{\|\mathbf{b}_{1}\|^2}{\|\mathbf{\hat{b}}_{n}\|^2}}
\end{equation}
where $\mathcal{B}_{\text{KZ}}$ is the set of all KZ-reduced bases of dimension $n$. The KZ
constant is bounded in terms of the Hermite constant by \cite{schnorr87}
\begin{equation}\label{KZconstantBound}
  \xi_n \leq \gamma_n \gamma_{n}^{1/(n-1)} \gamma_{n-1}^{1/(n-2)}...\gamma_2
\end{equation}
for $n \geq 2$. Since $\|\mathbf{b}_{1}\| = \lambda(L)$ in a KZ-reduced lattice, we have
\[\rho_{n,SIC} = \xi_n.\]
To obtain $\rho_{i,\text{SIC}}$ for $i < n$, note that if $\mathbf{B}$ is a KZ-reduced basis of
lattice $L(\mathbf{B})$, then $\mathbf{B}_i = [\mathbf{b}_1, ..., \mathbf{b}_i]$ is a KZ reduced
basis of the sublattice $L_i = L(\mathbf{B}_i)$. This follows from the definition of KZ reduction
and the fact that $L_i \subseteq L$. Accordingly, we have
\begin{ther}\label{theorem:KZ-SIC} For KZ-SIC, the proximity factors are bounded by
\[\rho_{i,\text{SIC}} = \xi_i.\]
\end{ther}
\vspace{0.5cm}

Since the KZ constant $\xi_n$ is a nondecreasing function of $n$ \cite{schnorr87}, we have
$\rho_{\text{SIC}} = \xi_n$, i.e., the KZ constant exactly quantifies the worst-case performance
loss of SIC. The KZ constant is exactly known for $n = 2$ and $n=3$: $\xi_2 = 4/3$ and $\xi_3 =
3/2$ \cite{schnorr87}. It is also known that \cite{schnorr87,lagarias}
\begin{equation}\label{KZconstantBound2}
  \xi_n \leq n^{1+\ln n},
\end{equation}
which is sub-exponential.

\subsubsection{ZF}
\begin{ther} For KZ-ZF, the proximity factors are bounded by
\begin{equation}\label{rhoiZFKZ2}
\begin{split}
  \rho_{i,\text{ZF}} \leq \xi_{i}+\sum_{j=1}^{n-i}{\left(\frac{1}{3}\right)^2 \left(\frac{3}{2}\right)^{2j}\xi_{i+j}}.\\
\end{split}
\end{equation}
\end{ther}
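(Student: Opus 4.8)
The plan is to reduce the ZF proximity factor to a weighted sum of SIC-type quantities, each already controlled by a KZ constant, the weights being squared Gram--Schmidt coefficients of the dual basis. First I would rewrite the quantity being maximized. By the distance identity (\ref{ZFlength}), $\|\mathbf{b}_i\|\sin\theta_i = \|\mathbf{b}^*_{n-i+1}\|^{-1}$, so that $\rho_{i,\text{ZF}} = \sup_{\mathbf{B}\in\mathcal{B}_{\text{KZ}}}\lambda^2(L)\,\|\mathbf{b}^*_{n-i+1}\|^2$; the problem becomes one of upper-bounding the squared length of a single dual basis vector.

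Next I would expand $\|\mathbf{b}^*_{k}\|^2$ (with $k=n-i+1$) in the Gram--Schmidt basis of the dual, using $\|\mathbf{b}^*_k\|^2 = \|\hat{\mathbf{b}}^*_k\|^2 + \sum_{l<k}(\mu^*_{k,l})^2\|\hat{\mathbf{b}}^*_l\|^2$. The reversal relation (\ref{reverseGS1}) turns every dual Gram--Schmidt length into a reciprocal primal one, $\|\hat{\mathbf{b}}^*_k\|^2 = \|\hat{\mathbf{b}}_i\|^{-2}$ and $\|\hat{\mathbf{b}}^*_l\|^2 = \|\hat{\mathbf{b}}_{n-l+1}\|^{-2}$. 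Reindexing the sum by $j$ through $n-l+1=i+j$ yields
\[
\lambda^2(L)\,\|\mathbf{b}^*_k\|^2 = \frac{\lambda^2(L)}{\|\hat{\mathbf{b}}_i\|^2} + \sum_{j=1}^{n-i}(\mu^*_{k,k-j})^2\,\frac{\lambda^2(L)}{\|\hat{\mathbf{b}}_{i+j}\|^2}.
\]
Each ratio $\lambda^2(L)/\|\hat{\mathbf{b}}_{m}\|^2$ is exactly the SIC quantity already treated: since $[\mathbf{b}_1,\ldots,\mathbf{b}_m]$ is a KZ basis of the sublattice $L_m$ with shortest vector $\mathbf{b}_1=\lambda(L)$, the argument proving Theorem~\ref{theorem:KZ-SIC} gives $\lambda^2(L)/\|\hat{\mathbf{b}}_m\|^2 \leq \xi_m$. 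This produces the leading $\xi_i$ term and the $\xi_{i+j}$ factors of the claimed bound.

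The remaining and main task is to bound the dual coefficients $\mu^*_{k,k-j}$. By the relation $\bm{\mu}^* = \mathbf{J}\bm{\mu}^{-T}\mathbf{J}$ of Proposition~\ref{prop1}, this coefficient equals the entry $(\bm{\mu}^{-1})_{i+j,\,i}$ of the inverse of the primal Gram--Schmidt matrix. I would bound these entries by the forward recursion for the inverse of a unit lower-triangular matrix: writing $\bm{\nu}=\bm{\mu}^{-1}$ and $P_a=|\nu_{a,i}|$, the identity $\nu_{a,i} = -\sum_{l=i}^{a-1}\mu_{a,l}\nu_{l,i}$ together with size reduction $|\mu_{a,l}|\leq 1/2$ gives $P_i=1$ and $P_a \leq \tfrac12\sum_{l=i}^{a-1}P_l$. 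Introducing partial sums $T_a=\sum_{l=i}^{a}P_l$ turns this into $T_a \leq \tfrac32 T_{a-1}$, hence $T_a\leq(3/2)^{a-i}$ and $P_{i+j}\leq\tfrac12(3/2)^{j-1}$ for $j\geq 1$. Squaring gives $(\mu^*_{k,k-j})^2 \leq \tfrac14(3/2)^{2(j-1)} = (1/3)^2(3/2)^{2j}$, precisely the weight in (\ref{rhoiZFKZ2}); combining the three bounds finishes the proof.

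The step I expect to be the obstacle is this last control of $(\bm{\mu}^{-1})_{i+j,i}$: the off-diagonal entries of $\bm{\mu}^{-1}$ are \emph{not} themselves bounded by $1/2$, so one cannot simply reuse size reduction, and the geometric factor $3/2$ must be extracted from the recursion exactly, since any looser solution would spoil the clean constant $(1/3)^2(3/2)^{2j}$. I would also check that the dual expansion is reindexed so that the surviving diagonal term ($l=k$) lines up with the $\xi_i$ contribution rather than being double-counted in the sum.
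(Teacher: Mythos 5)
Your proposal is correct; every step checks out. The identity $d_{i,\text{ZF}}=\frac{1}{2}\|\mathbf{b}_{n-i+1}^*\|^{-1}$ from (\ref{ZFlength}) holds for arbitrary bases, the reindexed dual Gram--Schmidt expansion is right, Proposition~\ref{prop1} does give $\mu^*_{k,k-j}=(\bm{\mu}^{-1})_{i+j,i}$, your recursion yields $|(\bm{\mu}^{-1})_{i+j,i}|\leq\frac{1}{2}(3/2)^{j-1}$ whose square is exactly $(1/3)^2(3/2)^{2j}$, and the sublattice argument gives $\lambda^2(L)/\|\mathbf{\hat{b}}_m\|^2\leq\xi_m$. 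However, your route is organized differently from the paper's, in two components. First, the paper obtains the same weighted-sum expansion from the primal side: Appendix~\ref{appendix:theta} computes $\|\mathbf{b}_i\|^2\sin^2\theta_i$ by a Lagrangian minimization over the hyperplane, giving $1/(\mathbf{A}^{-1})_{1,1}$ with $(\mathbf{A}^{-1})_{1,1}=\mathbf{p}^T\bm{\Lambda}^{-1}\mathbf{p}$, where $\mathbf{p}$ is the first column of $\mathbf{M}^{-1}$; since $\mathbf{M}$ is the bottom-right block of the unit lower-triangular $\bm{\mu}$, the entries of $\mathbf{p}$ are precisely your $(\bm{\mu}^{-1})_{i+j,i}$, so your exact identity $1/(\|\mathbf{b}_i\|^2\sin^2\theta_i)=\|\mathbf{b}_{n-i+1}^*\|^2$ reproduces the paper's (\ref{A11}) in dual clothing, with no minimization needed. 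Second, the obstacle you correctly isolated --- that off-diagonal entries of $\bm{\mu}^{-1}$ are \emph{not} bounded by $1/2$ under size reduction --- is exactly the content of the paper's Lemma~\ref{lem:lemma4}, which bounds them by $\frac{1}{3}(3/2)^{i-j}$ via induction on the dimension using partitioned-matrix inversion; your forward recursion with partial sums ($T_a\leq\frac{3}{2}T_{a-1}$) proves the identical bound, since $\frac{1}{2}(3/2)^{j-1}=\frac{1}{3}(3/2)^{j}$, and is arguably the slicker derivation (though it does not exhibit the extremal case, all $\mu_{a,l}=-1/2$, which the paper notes for tightness). What your route buys: the weights acquire a conceptual meaning as squared dual Gram--Schmidt coefficients, and the argument dovetails with the dual-reduction analysis of Section VI, which uses the same template via (\ref{rho-DLLL-ZF}). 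What the paper's route buys: the formula (\ref{sintheta}) from Appendix~\ref{appendix:theta} is reused for the LLL-ZF bound (Lemma~\ref{lem:lemma1}) and for numerically computing $\theta_i$, so one computation serves several theorems.
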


\begin{proof}
By (\ref{BabaiAngle}), we have
\begin{equation}\label{rhoiZF-anybasis}
\begin{split}
  \rho_{i,\text{ZF}} &= \sup_{\mathbf{B}\in \mathcal{B}_{\text{KZ}}}{\frac{\lambda^2(L)}{\|\mathbf{b}_i\|^2\sin^{2}\theta_i}} \\
    &\leq \sup_{\mathbf{B}\in \mathcal{B}_{\text{KZ}}}{\frac{\lambda^2(L)}{\sum_{j=i}^n{r_j^2\|\mathbf{\hat{b}}_j\|^2}}}.\\
\end{split}
\end{equation}
Applying (\ref{A11}) yields
\[\rho_{i,\text{ZF}} \leq \sup_{\mathbf{B}\in \mathcal{B}_{\text{KZ}}} \lambda^2(L)\left[\|\mathbf{\hat{b}}_i\|^{-2}+\sum_{j=1}^{n-i} \left(\frac{1}{3}\right)^{2} \left(\frac{3}{2}\right)^{2j} \|\mathbf{\hat{b}}_{i+j}\|^{-2}\right].\]
Using the facts that $\lambda^2(L)=\|\mathbf{\hat{b}}_1\|^2$ and $\|\mathbf{\hat{b}}_{1}\|^2 \leq
\xi_{j}\|\mathbf{\hat{b}}_{j}\|^2$ for a KZ-reduced basis, we obtain (\ref{rhoiZFKZ2}).
\end{proof}


The maximum is attained when $i=1$, and
\begin{equation}\label{rhoZFKZ}
  \rho_{\text{ZF}} \leq 1+\sum_{j=1}^{n-1}{\left(\frac{1}{3}\right)^2
  \left(\frac{3}{2}\right)^{2j}\xi_{j+1}}.
\end{equation}
Using the upper bound (\ref{KZconstantBound2}), we obtain a more explicit form
\begin{equation}\label{rhoiZFKZ3}
\begin{split}
  \rho_{\text{ZF}} \leq \left(\frac{9}{4}\right)^{n-1} \cdot n^{1+\ln n}.
\end{split}
\end{equation}

\section{Dual Basis Reduction}

In this Section, we shall derive upper bounds on the proximity factors when the dual rather than
the primal basis is reduced.



\subsection{Dual LLL Reduction}

\subsubsection{SIC}
\begin{ther} For SIC with dual LLL reduction, the proximity factors are bounded by
\begin{equation}\label{}
\begin{split}
   \rho_{i,\text{SIC}} & \leq \beta^{i-1}.
\end{split}
\end{equation}
\end{ther}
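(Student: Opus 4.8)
The goal is to bound $\rho_{i,\text{SIC}}$ for SIC when the *dual* basis $\mathbf{B}^*$ is LLL-reduced. The plan is to translate the SIC distance property into the dual domain using the relations already established. By definition $\rho_{i,\text{SIC}} = \sup \lambda^2(L)/\|\hat{\mathbf{b}}_i\|^2$. The key bridge is equation~(\ref{reverseGS2}), which gives $\|\hat{\mathbf{b}}_i\| = \|\hat{\mathbf{b}}_{n-i+1}^*\|^{-1}$, so that
\[
\rho_{i,\text{SIC}} = \sup_{\mathbf{B}^* \in \mathcal{B}_{\text{LLL}}} \lambda^2(L)\,\|\hat{\mathbf{b}}_{n-i+1}^*\|^2.
\]
This expresses the quantity of interest entirely in terms of the (LLL-reduced) dual Gram--Schmidt vectors, which is exactly where the LLL machinery applies.

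**The main steps.** First I would relate $\lambda(L)$ to the dual basis. Since $\lambda(L) \le \|\mathbf{b}_i\|$ for every $i$, and in particular $\lambda(L) \le \|\mathbf{b}_1\|$, I need a handle on a short primal vector in terms of dual Gram--Schmidt lengths. Using~(\ref{reversedual}) together with~(\ref{reverseGS1}), one can express $\lambda(L)$ --- or more conveniently an upper bound on it --- through the dual data. A cleaner route is to observe that $\lambda(L) = \|\hat{\mathbf{b}}_1\| = \|\hat{\mathbf{b}}_n^*\|^{-1}$ via~(\ref{reverseGS1}) is \emph{not} generally true (the shortest vector need not be the first Gram--Schmidt vector), so instead I would bound $\lambda(L)$ by the smallest among the primal $\|\mathbf{b}_j\|$, or better, work directly from the LLL property applied to $\mathbf{B}^*$. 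The second step invokes the standard LLL Gram--Schmidt inequality~(\ref{GSlength}), now for the dual basis:
\[
\|\hat{\mathbf{b}}_k^*\|^2 \le \beta^{\,\ell - k}\,\|\hat{\mathbf{b}}_\ell^*\|^2, \quad 1 \le k < \ell \le n,
\]
which controls how much the dual Gram--Schmidt lengths can spread.

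**Closing the bound.** Combining these, I would set $k = n-i+1$ and compare $\|\hat{\mathbf{b}}_{n-i+1}^*\|^2$ against $\|\hat{\mathbf{b}}_n^*\|^2$, which by~(\ref{reverseGS1}) equals $\|\hat{\mathbf{b}}_1\|^{-2} = \lambda^2(L)^{-1}$ when $\mathbf{b}_1$ attains $\lambda(L)$ --- and crucially, an LLL-reduced dual basis guarantees $\|\mathbf{b}_1^*\| = \|\hat{\mathbf{b}}_1^*\|$ is short, forcing $\hat{\mathbf{b}}_n = (\hat{\mathbf{b}}_1^*)/\|\hat{\mathbf{b}}_1^*\|^2$ to carry the shortest-vector information. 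The LLL property applied to the dual gives $\|\hat{\mathbf{b}}_{n-i+1}^*\|^2 \le \beta^{\,(n)-(n-i+1)}\|\hat{\mathbf{b}}_n^*\|^2 = \beta^{\,i-1}\|\hat{\mathbf{b}}_n^*\|^2$. Substituting and using $\lambda^2(L)\,\|\hat{\mathbf{b}}_n^*\|^2 \le 1$ (from $\lambda(L) \le \|\hat{\mathbf{b}}_n^*\|^{-1} = \|\hat{\mathbf{b}}_1\|$, which holds since $\lambda(L)$ is the shortest vector length) yields $\rho_{i,\text{SIC}} \le \beta^{i-1}$ directly.

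**The main obstacle.** The delicate point is justifying $\lambda^2(L)\,\|\hat{\mathbf{b}}_n^*\|^2 \le 1$ rigorously, i.e.\ confirming $\lambda(L) \le \|\hat{\mathbf{b}}_1\|$. This is the trivial shortest-vector inequality $\lambda(L) \le \|\mathbf{b}_1\|$ combined with $\|\hat{\mathbf{b}}_1\| = \|\mathbf{b}_1\|$ (the first Gram--Schmidt vector equals the first basis vector). The subtlety is purely bookkeeping: tracking the index reversal $i \leftrightarrow n-i+1$ between primal SIC distances and dual Gram--Schmidt lengths through~(\ref{reverseGS2}), and ensuring the LLL decay exponent comes out as $i-1$ rather than $n-i$. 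Once the reversal is handled correctly, the bound follows immediately from~(\ref{GSlength}) applied to the dual, with no new inequalities required --- this explains why dual LLL-SIC attains the clean bound $\beta^{i-1}$ rather than the more complicated primal expression.
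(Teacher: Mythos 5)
Your proof is correct, but it reaches the bound by a different route than the paper. The paper's own proof never works with the dual Gram--Schmidt vectors at all: it observes that LLL reduction of $\mathbf{B}^*$ forces the \emph{primal} basis to be \emph{effectively} LLL-reduced (Lov\'asz condition plus $|\mu_{i,i-1}|\leq 1/2$), invoking Howgrave-Graham's structural result on $\bm{\mu}^{-1}$ (the superdiagonal of $\bm{\mu}^{-1}$ is $-\mu_{i,i-1}$), and then notes that effective reduction is all that is needed for the primal decay $\|\mathbf{\hat{b}}_1\|^2 \leq \beta^{i-1}\|\mathbf{\hat{b}}_i\|^2$, after which $\rho_{i,\text{SIC}} \leq \beta^{i-1}\sup \lambda^2(L)/\|\mathbf{b}_1\|^2 \leq \beta^{i-1}$. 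You instead stay entirely in the dual domain: apply the standard LLL inequality (\ref{GSlength}) to the genuinely LLL-reduced basis $\mathbf{B}^*$ to get $\|\hat{\mathbf{b}}_{n-i+1}^*\|^2 \leq \beta^{i-1}\|\hat{\mathbf{b}}_n^*\|^2$, then use the reciprocity relations (\ref{reverseGS1})--(\ref{reverseGS2}) from Proposition \ref{prop1} to flip this into the primal statement, closing with $\|\hat{\mathbf{b}}_n^*\|^{-1} = \|\hat{\mathbf{b}}_1\| = \|\mathbf{b}_1\| \geq \lambda(L)$. The two arguments are logically equivalent (reciprocity maps the dual decay exactly onto the primal decay), but yours is arguably more economical: it needs only Proposition \ref{prop1}, which the paper has already proved, and avoids the effective-reduction machinery and the claim that the Lov\'asz condition transfers to the dual. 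What the paper's route buys in exchange is the conceptual observation---used in its later discussion---that primal and dual LLL reduction produce the same Gram--Schmidt vectors under effective reduction, which explains why their SIC performance coincides. One cosmetic remark: in your middle paragraph the clause ``equals $\lambda^2(L)^{-1}$ when $\mathbf{b}_1$ attains $\lambda(L)$'' introduces an unnecessary conditional; as your final paragraph correctly states, all that is needed is the unconditional inequality $\lambda(L) \leq \|\mathbf{b}_1\| = \|\hat{\mathbf{b}}_1\| = \|\hat{\mathbf{b}}_n^*\|^{-1}$, so the chain of inequalities stands as written.
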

\begin{proof}
A scrutiny into the derivation of $\rho_{\text{SIC}}$ with primal-basis reduction reveals that the
following condition is crucial:
\begin{equation}\label{EffectiveCondition}
  |\mu_{i,i-1}| \leq 1/2, \quad 1 < i \leq n.
\end{equation}
A basis satisfies the Lov\'asz condition and (\ref{EffectiveCondition}) is called an {\it
effectively} LLL-reduced basis \cite{Howgrave-Graham97}. Effective LLL reduction produces a basis
with the same set of Gram-Schmidt vectors. Accordingly, with SIC, it yields the same output as the
standard LLL reduction. Moreover, if a basis is effectively LLL-reduced, so is its dual basis
\cite{Howgrave-Graham97}. This is because the inverse of the coefficient matrix $\bm{\mu}$ admits
the form \cite{Howgrave-GrahamPrivate}
\[
\bm{\mu}^{-1}=
  \begin{bmatrix}
    {1} & {0} & {\cdots} & {0} & {0} \\
    {-\mu_{2,1}} & {1} & {\cdots} & {0} & {0} \\
    {\times} & {-\mu_{3,2}} & {1} & {\cdots} & {0} \\
    {} & {} & {} & {\ddots} & {} & {} \\
    {\times} & {\cdots} & {\times} & {-\mu_{n,n-1} }& {1} \
  \end{bmatrix}
\]
and the Lov\'asz condition can be shown to be satisfied for the dual basis.


Although other off-diagonal elements may have absolute values larger than $1/2$, it does not really
cause a problem, since we still have
\begin{equation}\label{DLLL-SIC-bound}
\begin{split}
   \rho_{i,\text{SIC}} &= \sup_{\mathbf{B}^*\in \mathcal{B}_{\text{LLL}}}{ \frac{\lambda^2(L)}{\|\mathbf{\hat{b}}_i\|^2}} \\
   & \leq \beta^{i-1}\sup_{\mathbf{B}^*\in \mathcal{B}_{\text{LLL}}}{\frac{\lambda^2(L)}{\|\mathbf{b}_1\|^2}}\\
   & \leq \beta^{i-1}.
\end{split}
\end{equation}
\end{proof}

Note that this is the same as (\ref{expbound}) for primal LLL reduction. Accordingly, we expect
similar performance for primal and dual LLL reductions.


\subsubsection{ZF}
\begin{ther} For ZF with dual LLL reduction, the proximity factors are bounded by
\begin{equation}\label{rhoiZF-dualLLL}
\begin{split}
    \rho_{i,\text{ZF}} &\leq \left(1 + \frac{\beta}{4}\cdot \frac{\beta^{n-i}-1}{\beta-1}\right)\beta^{i-1}.
\end{split}
\end{equation}
\end{ther}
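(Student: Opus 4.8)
The plan is to convert the proximity factor into a statement about the lengths of the \emph{dual} basis vectors (which is what is actually reduced here) and then feed in the standard LLL estimates for $\mathbf{B}^*$. The entry point is the ZF identity (\ref{ZFlength}), $d_{i,\text{ZF}} = \tfrac{1}{2}\|\mathbf{b}_{n-i+1}^*\|^{-1}$, which together with $d_{\text{LD}}=\lambda(L)/2$ and the definition (\ref{PF}) collapses the problem to
\[
\rho_{i,\text{ZF}} = \sup_{\mathbf{B}^*\in\mathcal{B}_{\text{LLL}}} \lambda^2(L)\,\|\mathbf{b}_{n-i+1}^*\|^2 .
\]
So I only need an upper bound on the product $\lambda^2(L)\|\mathbf{b}_{n-i+1}^*\|^2$ for an LLL-reduced dual basis.

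First I would peel off the Gram--Schmidt part of $\|\mathbf{b}_{n-i+1}^*\|^2$. Since $\mathbf{B}^*$ is fully LLL-reduced, the length estimate (\ref{vectorlength}) applies verbatim to $\mathbf{B}^*$ at index $k=n-i+1$, giving
\[
\|\mathbf{b}_{n-i+1}^*\|^2 \leq \left(1 + \frac{\beta^{n-i+1}-\beta}{4(\beta-1)}\right)\|\hat{\mathbf{b}}_{n-i+1}^*\|^2 .
\]
I would then observe that the prefactor is exactly $1+\tfrac{\beta}{4}\cdot\tfrac{\beta^{n-i}-1}{\beta-1}$, i.e.\ the bracketed quantity in the theorem, since $\beta^{n-i+1}-\beta=\beta(\beta^{n-i}-1)$. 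It therefore remains to show that $\lambda^2(L)\,\|\hat{\mathbf{b}}_{n-i+1}^*\|^2 \le \beta^{i-1}$.

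For this I would combine two ingredients. From the reversed Gram--Schmidt relation (\ref{reverseGS1}) the top primal vector satisfies $\|\hat{\mathbf{b}}_1\|^2=\|\hat{\mathbf{b}}_n^*\|^{-2}$, and since $\mathbf{b}_1=\hat{\mathbf{b}}_1$ is a lattice point, $\lambda^2(L)\le\|\hat{\mathbf{b}}_1\|^2=\|\hat{\mathbf{b}}_n^*\|^{-2}$. The Gram--Schmidt decay bound (\ref{GSlength}) for the reduced $\mathbf{B}^*$ gives $\|\hat{\mathbf{b}}_{n-i+1}^*\|^2 \le \beta^{\,n-(n-i+1)}\|\hat{\mathbf{b}}_n^*\|^2 = \beta^{i-1}\|\hat{\mathbf{b}}_n^*\|^2$. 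Multiplying the two, the $\|\hat{\mathbf{b}}_n^*\|^2$ factors cancel, leaving $\lambda^2(L)\,\|\hat{\mathbf{b}}_{n-i+1}^*\|^2\le\beta^{i-1}$. Combined with the prefactor above this yields the claimed bound $\rho_{i,\text{ZF}}\le\bigl(1+\tfrac{\beta}{4}\cdot\tfrac{\beta^{n-i}-1}{\beta-1}\bigr)\beta^{i-1}$.

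Each step is elementary; the main obstacle is purely one of bookkeeping the index reversal $i\leftrightarrow n-i+1$ correctly when passing between primal and dual quantities, and applying each auxiliary estimate to the correct basis---(\ref{vectorlength}) and (\ref{GSlength}) to the reduced $\mathbf{B}^*$, and the primal relation (\ref{reverseGS1}) to bound $\lambda(L)$. I would also flag one subtlety worth verifying: unlike the SIC dual bound, which needed only \emph{effective} reduction, the ZF argument invokes (\ref{vectorlength}), whose derivation uses all the size-reduction inequalities $|\mu_{i,j}^*|\le 1/2$; hence it genuinely requires that $\mathbf{B}^*$ be fully LLL-reduced, which is exactly the set over which the supremum is taken.
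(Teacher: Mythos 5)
Your proof is correct, and it takes a genuinely different route from the paper's. The paper never invokes the identity (\ref{ZFlength}) in its own argument: it stays in the primal domain, starting from the minimization formula (\ref{BabaiAngle}) for $\|\mathbf{b}_i\|^2\sin^2\theta_i$, then applying the Appendix~\ref{appendix:dual} bound (\ref{generalization-dual}) --- which rests on Lemma~\ref{lem:lemma3}, i.e., that size reduction of the dual forces the off-diagonal entries of $\bm{\mu}^{-1}$ into $[-1/2,1/2]$ --- and finally using the fact that the primal basis is \emph{effectively} LLL-reduced to get $\|\mathbf{\hat{b}}_k\|^2 \geq \beta^{-(k-1)}\|\mathbf{b}_1\|^2$, together with $\lambda^2(L)\leq\|\mathbf{b}_1\|^2$. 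You instead collapse $\rho_{i,\text{ZF}}$ to $\sup\lambda^2(L)\,\|\mathbf{b}_{n-i+1}^*\|^2$ via (\ref{ZFlength}) and work entirely in the dual domain, applying the standard estimates (\ref{vectorlength}) and (\ref{GSlength}) verbatim to the reduced basis $\mathbf{B}^*$, and using (\ref{reverseGS1}) only once, at index $1$, to identify $\lambda^2(L)\leq\|\mathbf{b}_1\|^2=\|\mathbf{\hat{b}}_n^*\|^{-2}$. The two chains are in fact dual images of one another: under Proposition~\ref{prop1}, Lemma~\ref{lem:lemma3} is precisely the statement that $\bm{\mu}^*$ is size-reduced seen from the primal side, and the effective-LLL decay of the primal Gram--Schmidt vectors is the reversed LLL decay of the dual ones, which is why both arguments land on the identical bound $\bigl(1+\frac{1}{4}\sum_{j=1}^{n-i}\beta^j\bigr)\beta^{i-1}$. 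What your route buys is self-containment and transparency --- no Lagrangian/matrix-inverse machinery from the appendices and no appeal to the notion of effective reduction; it also makes visible \emph{why} dual reduction helps ZF (the facet distance literally is the reciprocal dual vector length). What the paper's route buys is reusability: the intermediate bound (\ref{rho-DLLL-ZF}) is recycled unchanged for dual KZ-ZF, where one substitutes $\gamma_j\|\mathbf{\hat{b}}_j\|\geq\lambda(L)$ in place of the LLL decay. Your closing caveat --- that (\ref{vectorlength}) applied to $\mathbf{B}^*$ requires full, not merely effective, size reduction of the dual --- is also correct and worth stating.
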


\begin{proof}
The derivation of the bound on $\rho_{i,\text{ZF}}$ is analogous to that for primal LLL reduction.
We start by
\[\rho_{i,\text{ZF}} \leq \sup_{\mathbf{B}^*\in \mathcal{B}_{\text{LLL}}}{\frac{\lambda^2(L)}{\sum_{j=i}^n{r_j^2 \|\mathbf{\hat{b}}_j\|^2}}},\]
which holds for any bases. Incorporating the lower bound (\ref{generalization-dual}) in Appendix
\ref{appendix:dual}, we deduce
\begin{equation}\label{rho-DLLL-ZF}
    \rho_{i,\text{ZF}} \leq \sup_{\mathbf{B}^*\in \mathcal{B}_{\text{LLL}}} \lambda^2(L) \left(\|\mathbf{\hat{b}}_i\|^{-2} + \frac{1}{4}\sum_{j=1}^{n-i} \|\mathbf{\hat{b}}_{i+j}\|^{-2}\right).
\end{equation}
Besides, the primal basis is effectively LLL-reduced when the dual basis is LLL-reduced. Thus,
$\|\mathbf{\hat{b}}_i\|^2 \geq \beta^{-(i-1)}\|\mathbf{{b}}_1\|^2$. We obtain
\begin{equation}\label{rhoiZF-dualLLL2}
\begin{split}
    \rho_{i,\text{ZF}} &\leq \sup_{\mathbf{B}^*\in \mathcal{B}_{\text{LLL}}}{\frac{\lambda^2(L)}{\|\mathbf{{b}}_1\|^2}}\left(1 + \frac{1}{4}\sum_{j=1}^{n-i}
\beta^{j}\right)\beta^{i-1}\\
&= \left(1 + \frac{1}{4}\sum_{j=1}^{n-i} \beta^{j}\right)\beta^{i-1},
\end{split}
\end{equation}
which leads to (\ref{rhoiZF-dualLLL}).
\end{proof}

Since the right-hand side of (\ref{rhoiZF-dualLLL}) is an increasing function of $i$ for $\beta
\geq 4/3$, we have
\begin{equation}\label{rhoiZF-dualLLL2}
    \rho_{\text{ZF}} \leq \beta^{n-1}.
\end{equation}
We emphasize that the worst case bound (\ref{rhoiZF-dualLLL2}) is the same as dual LLL-SIC and is
close to that for SIC with primal LLL reduction.

\subsection{Dual KZ Reduction}

\subsubsection{SIC}
It is shown in \cite{lagarias} that if a reversed dual basis is KZ reduced, then
$\gamma_i\|\mathbf{\hat{b}}_i\| \geq \lambda(L)$. Accordingly, we have
\begin{ther} For SIC with dual KZ reduction, the proximity factors are bounded by
\begin{equation}\label{DKZSIC}
  \rho_{i,\text{SIC}} \leq \gamma_i^2.
\end{equation}
\end{ther}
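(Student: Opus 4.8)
The plan is to reduce the whole statement to the single transference estimate quoted immediately before the theorem, exploiting the fact that for SIC the proximity factor sees the basis only through the primal Gram--Schmidt lengths. Recalling the definition (\ref{PF}) specialized to the dual-reduction setting, the supremum is taken over those bases whose reversed dual $\mathbf{B}^*$ is KZ-reduced, so that
\[\rho_{i,\text{SIC}} = \sup_{\mathbf{B}^*\in\mathcal{B}_{\text{KZ}}}\frac{\lambda^2(L)}{\|\mathbf{\hat{b}}_i\|^2}.\]
It therefore suffices to establish the uniform lower bound $\|\mathbf{\hat{b}}_i\|\geq\lambda(L)/\gamma_i$ on $\mathcal{B}_{\text{KZ}}$; the rest of the argument is a one-line rearrangement.

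Granting the inequality $\gamma_i\|\mathbf{\hat{b}}_i\|\geq\lambda(L)$ from \cite{lagarias}, the proof is immediate. Squaring gives $\lambda^2(L)/\|\mathbf{\hat{b}}_i\|^2\leq\gamma_i^2$ for every admissible basis, and since the right-hand side is a constant that does not depend on $\mathbf{B}$, the supremum over $\mathcal{B}_{\text{KZ}}$ inherits the same bound, yielding $\rho_{i,\text{SIC}}\leq\gamma_i^2$.

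The substance of the argument, and the step I would expect to require the most care were one to make it self-contained, is the transference inequality $\gamma_i\|\mathbf{\hat{b}}_i\|\geq\lambda(L)$ itself; the plan here is a duality argument. By (\ref{reverseGS1}) one has $\|\mathbf{\hat{b}}_i\|^{-1}=\|\mathbf{\hat{b}}_{n-i+1}^*\|$, and the KZ condition on the reversed dual forces $\mathbf{\hat{b}}_{n-i+1}^*$ to be a shortest vector of the $i$-dimensional projected dual lattice obtained by projecting $L^*$ onto the orthogonal complement of $\mathrm{span}(\mathbf{b}_1^*,\dots,\mathbf{b}_{n-i}^*)$. I would first identify, via the biorthogonality (\ref{reversedual}), that this complement is exactly $V_i=\mathrm{span}(\mathbf{b}_1,\dots,\mathbf{b}_i)$, so by the standard projection-duality relation the projected dual is the dual $L_i^*$ of the sublattice $L_i=L\cap V_i$, whence $\|\mathbf{\hat{b}}_i\|^{-1}=\lambda(L_i^*)$. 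Applying Hermite's inequality (\ref{Hermiteconstant}) to $L_i^*$ gives $\lambda(L_i^*)\leq\sqrt{\gamma_i}\,(\det L_i)^{-1/i}$, while applying it to $L_i$ together with the nesting $L_i\subseteq L$ (hence $\lambda(L)\leq\lambda(L_i)$) gives $\lambda(L)\leq\sqrt{\gamma_i}\,(\det L_i)^{1/i}$. Multiplying the two estimates cancels the determinant factors and produces $\lambda(L)\,\lambda(L_i^*)\leq\gamma_i$, that is, $\lambda(L)\leq\gamma_i\|\mathbf{\hat{b}}_i\|$, as required. Since the paper cites this bound rather than reproving it, I would simply invoke \cite{lagarias} and keep the displayed proof to the two lines above.
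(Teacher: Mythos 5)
Your proof is correct and takes essentially the same route as the paper: the paper likewise obtains (\ref{DKZSIC}) in one line by combining the cited inequality $\gamma_i\|\mathbf{\hat{b}}_i\| \geq \lambda(L)$ for a KZ-reduced reversed dual basis \cite{lagarias} with the definition (\ref{PF}) of $\rho_{i,\text{SIC}}$. Your supplementary duality argument for that transference inequality --- identifying $\|\mathbf{\hat{b}}_i\|^{-1} = \|\mathbf{\hat{b}}_{n-i+1}^*\| = \lambda(L_i^*)$ via (\ref{reverseGS1}), (\ref{reversedual}) and projection-duality, then multiplying the two Hermite bounds on $L_i$ and $L_i^*$ --- is a correct reconstruction of what the paper leaves entirely to the citation, but it does not change the proof's structure.
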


\vspace{0.5cm}Although the Hermite constant is very likely to be an increasing function, it has
never been proved \cite{lagarias}. Hence, we bound the overall proximity factor for SIC by
\begin{equation}\label{DKZbound1}
  \rho_{\text{SIC}} \leq \max_{1\leq i\leq n}{\gamma_i^2}.
\end{equation}

Since $\gamma_n \leq 2n/3$, we have $\rho_{\text{SIC}} \leq (2n/3)^2 <n^2$. A better bound can be
obtained by employing Blichfeldt's bound on the Hermite constant:
\begin{equation}\label{DKZbound2}
  \rho_{\text{SIC}} \leq  \left(\frac{2}{\pi}\right)^2\Gamma^{4/n}\left(2+\frac{n}{2}\right).
\end{equation}

We claim that $\gamma_i^2$ is never greater than the bound on the KZ constant
(\ref{KZconstantBound}). We prove this by induction. When $i=1$, $\xi_1 = \gamma_1^2 = 1$, and the
claim is true. Suppose it is true when $k=i-1$, from which we deduce $\gamma_{i-1} \leq
\gamma_{i-1}^{1/(i-2)}...\gamma_2$. When $k=i$ we have
\[\frac{\gamma_i^2}{\gamma_i \gamma_{i}^{1/(i-1)} \gamma_{i-1}^{1/(i-2)}...\gamma_2} \leq \frac{\gamma_i}{\gamma_i^{1/(i-1)}\gamma_{i-1}} = \frac{\gamma_i^{(i-2)/(i-1)}}{\gamma_{i-1}}.\]
By Mordell's inequality $\gamma_i^{i-2} \leq \gamma_{i-1}^{i-1}$ \cite{Mordell}, this is not
greater than one. Thus the claim is true for $k=i$ as well.

Therefore, reducing the reversed dual basis again has a smaller bound on $\rho_{i,\text{SIC}}$ than
reducing the primal basis.

\subsubsection{ZF}

Substituting $\gamma_j\|\mathbf{\hat{b}}_j\| \geq \lambda(L)$ for dual KZ reduction \cite{lagarias}
into (\ref{rho-DLLL-ZF}), we arrive at
\begin{ther} For ZF with dual KZ reduction, the proximity factors are bounded by
\begin{equation}\label{rhoiZFKZ-dual}
  \rho_{i,\text{ZF}} \leq \gamma_i^2+\frac{1}{4}\sum_{j=i+1}^{n}\gamma_{j}^2.
\end{equation}
\end{ther}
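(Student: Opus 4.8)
The plan is to follow the template of the dual LLL-ZF proof almost verbatim, replacing the LLL-specific length control of the Gram--Schmidt vectors by the KZ relation $\gamma_j\|\hat{\mathbf{b}}_j\| \geq \lambda(L)$ quoted from \cite{lagarias}. Concretely, I would first argue that the estimate (\ref{rho-DLLL-ZF}) continues to hold when the supremum is taken over dual KZ-reduced bases, and then substitute the KZ relation term by term into its right-hand side.

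For the first step, recall that (\ref{rho-DLLL-ZF}) was obtained from the generic inequality $\rho_{i,\text{ZF}} \leq \sup \lambda^2(L)/\sum_{j=i}^n r_j^2\|\hat{\mathbf{b}}_j\|^2$, which holds for \emph{any} basis, together with the bound (\ref{generalization-dual}) of Appendix \ref{appendix:dual}. The key point is that (\ref{generalization-dual}) is driven solely by the size-reduction condition $|\mu^*_{a,b}| \leq 1/2$ on the dual Gram--Schmidt coefficients: expanding $\|\mathbf{b}_{n-i+1}^*\|^2$ in its own Gram--Schmidt basis and using (\ref{reverseGS1}) to rewrite $\|\hat{\mathbf{b}}_k^*\|^{2} = \|\hat{\mathbf{b}}_{n-k+1}\|^{-2}$ turns each off-diagonal coefficient $(\mu^*)^2$ into a factor at most $1/4$. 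Since every KZ-reduced basis is in particular size-reduced, the dual basis here satisfies $|\mu^*_{a,b}| \leq 1/2$, so (\ref{generalization-dual}), and hence
\[\rho_{i,\text{ZF}} \leq \sup_{\mathbf{B}^*\in \mathcal{B}_{\text{KZ}}} \lambda^2(L)\left(\|\hat{\mathbf{b}}_i\|^{-2} + \frac{1}{4}\sum_{j=1}^{n-i}\|\hat{\mathbf{b}}_{i+j}\|^{-2}\right),\]
remain valid for dual KZ reduction.

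It then remains to bound each summand. From the dual KZ property $\gamma_j\|\hat{\mathbf{b}}_j\| \geq \lambda(L)$ one has $\lambda^2(L)\|\hat{\mathbf{b}}_j\|^{-2} \leq \gamma_j^2$ for every $j$, so the leading term is at most $\gamma_i^2$ and the $j$-th tail term is at most $\frac{1}{4}\gamma_{i+j}^2$. Summing and reindexing $k = i+j$ yields $\rho_{i,\text{ZF}} \leq \gamma_i^2 + \frac{1}{4}\sum_{k=i+1}^n \gamma_k^2$, which is (\ref{rhoiZFKZ-dual}). The one place that deserves care --- and the step I would present most explicitly --- is the transfer of (\ref{generalization-dual}) to the KZ setting: one must verify that its derivation never invokes the Lov\'asz condition but only size reduction, so that it applies unchanged to a KZ-reduced dual basis. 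Everything after that is a routine term-by-term substitution.
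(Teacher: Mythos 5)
Your proposal is correct and takes essentially the same route as the paper, which proves the theorem in one line by substituting the dual-KZ relation $\gamma_j\|\mathbf{\hat{b}}_j\| \geq \lambda(L)$ from \cite{lagarias} term by term into (\ref{rho-DLLL-ZF}). Your explicit verification that (\ref{generalization-dual}) depends only on the dual size-reduction condition $|\mu^*_{a,b}|\leq 1/2$ (which KZ reduction includes) and never on the Lov\'asz condition is exactly the point the paper leaves implicit when it reuses a bound nominally stated with the supremum over $\mathcal{B}_{\text{LLL}}$.
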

\vspace{0.5cm}We claim that the right-hand side of (\ref{rhoiZFKZ-dual}) decreases with $i$. To see
this, it is sufficient to show $\gamma_i^2 \geq \frac{3}{4}\gamma^2_{i+1}$. We apply Mordell's
inequality together with the inequality $\gamma_n \leq (2/\sqrt{3})^{n-1}$, yielding \[\gamma_{i+1}
\leq \gamma_{i}^{i/(i-1)}=\gamma_i \gamma_i^{1/(i-1)} \leq \gamma_i \cdot 2/\sqrt{3}.\] Thus the
claim is proven.

Accordingly, the right-hand side of (\ref{rhoiZFKZ-dual}) attains the maximum when $i=1$, and
\begin{equation}\label{rhoZFKZ-dual}
  \rho_{\text{ZF}} \leq 1+\frac{1}{4}\sum_{j=2}^{n}\gamma_{j}^2.
\end{equation}
Using the bound $\gamma_j \leq 2j/3$ for $j \geq 2$, we can obtain
\begin{equation}\label{rhoiZFKZ2-dual}
  \rho_{\text{ZF}} \leq \frac{8}{9}+\frac{1}{9}\cdot\frac{n(n+1)(2n+1)}{6}.
\end{equation}
The right side of (\ref{rhoiZFKZ2-dual}) is never larger than $n^3$, and will be dominated by
$n^3/27$ as $n$ goes large.

\section{Numerical Results and Discussion}



To visualize the behaviors of proximity factors, we plot the best upper bounds in Table
\ref{tab:comparison}. In Fig. \ref{fig:Primal}, we show the upper bounds on the proximity factors
for KZ reduction and for LLL reduction with best performance (i.e., $\delta=1$). For KZ reduction,
we employ (\ref{rhoZFKZ}) and (\ref{KZconstantBound}) in conjunction with the bound
(\ref{Hermiteestimp}) on the Hermite constant and the exact value (\ref{Hermite8}) for $n \leq 8$.
It is clear that the bounds on proximity factors for KZ reduction are smaller than those for LLL
reduction. This is expected since KZ reduction is a stronger notion of lattice reduction.

\begin{figure}[t]

\centering\centerline{\epsfig{figure=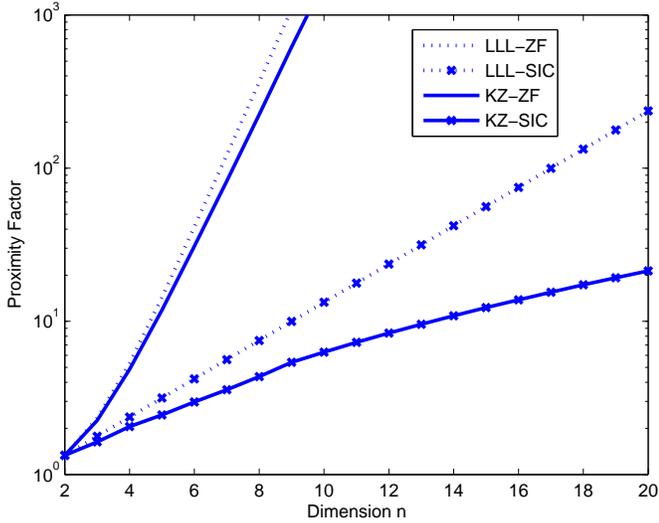,width=10cm}}

\vspace{-0.5cm}

\caption{Upper bounds on the proximity factors $\rho_{\text{ZF}}$ and $\rho_{\text{SIC}}$ for KZ
reduction and for LLL reduction with $\delta=1$.}

\vspace{-0.5cm}

\label{fig:Primal}
\end{figure}

Fig. \ref{fig:dual-LLL} shows the upper bounds on $\rho_{\text{ZF}}$ with primal and dual LLL
reduction. Obviously, the bound for reducing the dual basis is smaller for the purpose of ZF.

\begin{figure}[t]

\centering\centerline{\epsfig{figure=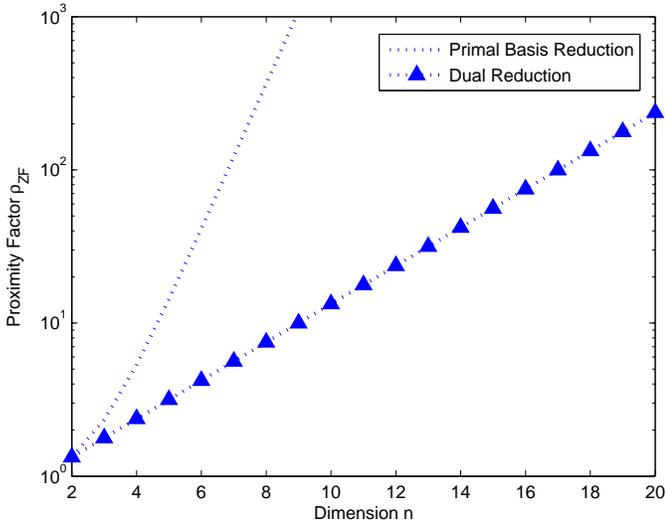,width=10cm}}


\caption{Upper bounds on the proximity factors $\rho_{\text{ZF}}$ for LLL reduction with
$\delta=1$.}

\vspace{-0.5cm}

\label{fig:dual-LLL}
\end{figure}

In Fig. \ref{fig:dual-KZ}, we show the upper bounds on the proximity factors for KZ reduction of
the primal as well as dual bases. Obviously, for ZF, the bound for reducing the dual basis is much
smaller. Meanwhile, we can see that, for SIC, the gain due to dual KZ reduction is not significant.
The reason is that as far as the Gram-Schmidt vectors are concerned, the primal and dual KZ
reduction are in fact not too much different, as also recognized in \cite{agrell}. Because the
product $\prod_{i=1}^{n} \|\mathbf{\hat{b}}_i\| = \det L$ is invariant under lattice reduction,
choosing short Gram-Schmidt vectors in the beginning will force long Gram-Schmidt vectors in the
end, and vice versa.

\begin{figure}[t]

\centering\centerline{\epsfig{figure=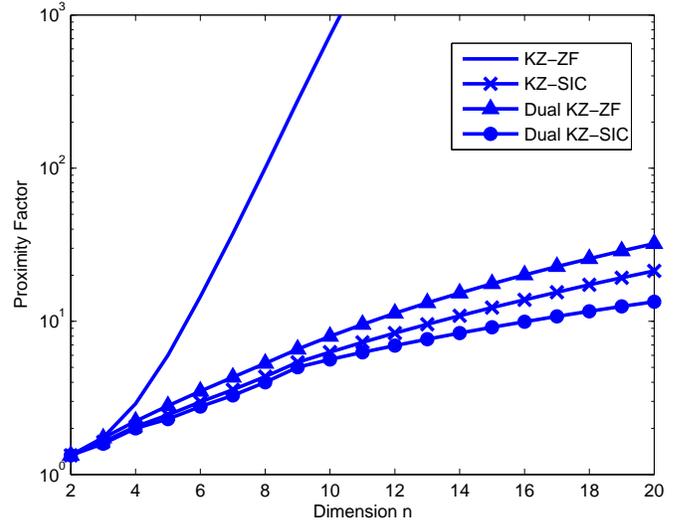,width=10cm}}

\vspace{-0.5cm}

\caption{Upper bounds on the proximity factors for KZ reduction of the primal and dual bases.}

\vspace{-0.5cm}

\label{fig:dual-KZ}
\end{figure}

Compared with primal basis reduction, the dual-basis reformulation is more natural in the following
senses.

For ZF, we have $\mathbf{B}^\dagger \mathbf{y}=\mathbf{x}+\mathbf{B}^\dagger\mathbf{n}$. Then, the
noise variance associated with the $i$-th output is proportional to $\|\mathbf{b}_i^*\|^2$.
Obviously, it makes sense to minimize $\|\mathbf{b}_i^*\|^2$.

Most existing reduction notions aim to find short vectors one after another; hence when applied to
the dual basis, they can be viewed as various greedy heuristics of finding the optimal bases for
decoding purposes. In particular, the dual Minkowski reduction can be viewed as a greedy algorithm
to approximate the ZF criterion, while the dual KZ reduction as a greedy algorithm to approximate
the SIC criterion, since they will successively find the shortest vector and the shortest
Gram-Schmidt vector for the dual basis, respectively.

To verify the benefit of dual reduction for LLL-ZF, we simulate the BER performance of an $8 \times
8$ MIMO system with i.i.d. complex standard normal channel coefficients. The complex basis matrix
$\mathbf{H}$ is converted into its real equivalent by
\[{\bf B} = \left[%
    \begin{array}{cc}
      \Re({\bf H}) & -\Im({\bf H}) \\
      \Im({\bf H}) & \Re ({\bf H}) \\
    \end{array}%
\right],\] and then $\mathbf{B}$ is reduced. The signal constellation at each antenna is 64QAM with
Gray mapping. With such a signal model, the SNR at each receive antenna is defined as $\text{SNR} =
n_T E_{x \in \text{64QAM}}[x^2]/\sigma^2$, where $n_T$ is the number of transmit antennas. Fig.
\ref{fig:BER_ZF} shows the simulated BER of ZF with primal LLL reduction and dual LLL reduction.
When the decoded lattice point falls outside of the signal boundary, we simply round it
componentwise back to the 64QAM alphabet. It is seen that the latter exhibits about 3 dB gain at
the BER of $10^{-5}$. Lattice decoding ignoring the boundary is also simulated, whose performance
is about 1.5 dB worse than ML decoding at the BER of $10^{-5}$.

\begin{figure}[t]

\centering\centerline{\epsfig{figure=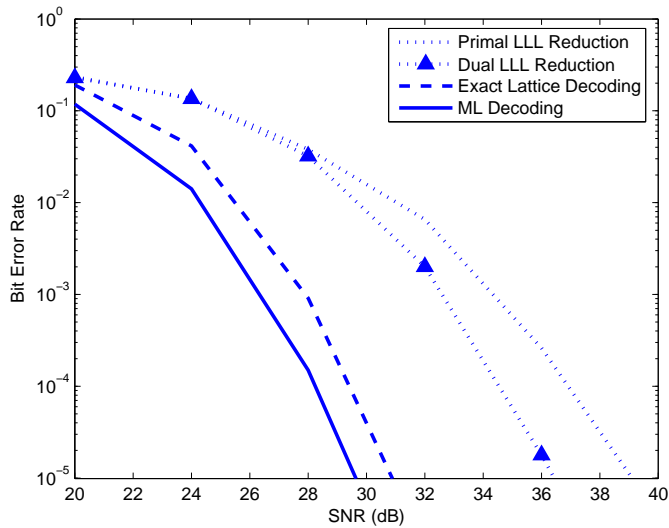,width=10cm}}

\vspace{-0.5cm}

\caption{BER of ZF with LLL reduction with $\delta = 3/4$ for 64QAM over an $8 \times 8$
complex-valued MIMO fading channel.}

\vspace{-0.5cm}

\label{fig:BER_ZF}
\end{figure}

A similar amount of gain is observed for ZF with primal and dual KZ reduction. In the simulation of
SIC, though, only a small difference in the BER is observed between primal and dual reduction. This
is not surprising given their close proximity factors (cf. Table \ref{tab:comparison}).




We assess the tightness of the bounds by means of numerical experimentation. Each time, a basis
matrix $\mathbf{B}$ is randomly generated and reduced, and then the minimum distance of the
decision region (for ZF and SIC, respectively) as well as the shortest vector is found. For each
value of $n$, the empirical proximity factor is obtained as the maximum over an ensemble of 10000
i.i.d. Gaussian matrices $\mathbf{B}$. While there is no guarantee that this maximum reaches the
worst-case bound, it should be a reasonable indicator of the theoretic proximity factor. In
particular, it can serve as a experimental lower bound on the theoretic proximity factor. Fig.
\ref{fig:PFLLL} shows the results for (primal-basis) LLL reduction with $\delta=3/4$. The numerical
results (in dB) still grow linearly, but at a lower slope. Fig. \ref{fig:PFLLL} also shows the real
SNR gap to ML decoding observed in computer simulation for $n = 8, 12, 16$. To estimate the real
SNR gap, the BER performance is simulated, and curves like those in Fig. \ref{fig:BER_ZF} are
obtained. It can be seen that the real SNR gap is even smaller. This behavior is consistent with
the common belief that in practice LLL reduction performs better than the theoretic upper bounds.
Fig. \ref{fig:PFKZ} demonstrates the numerical results for KZ reduction. The similar trend can be
seen.

\begin{figure}[t]

\centering\centerline{\epsfig{figure=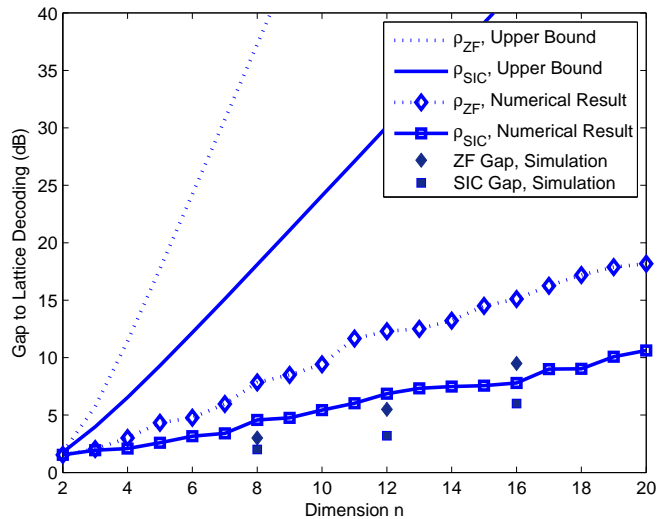,width=10cm}}

\caption{Comparison of the theoretic upper bounds and numerical results of the proximity factors
for LLL reduction ($\delta=3/4$).}


\label{fig:PFLLL}
\end{figure}

\begin{figure}[t]

\centering\centerline{\epsfig{figure=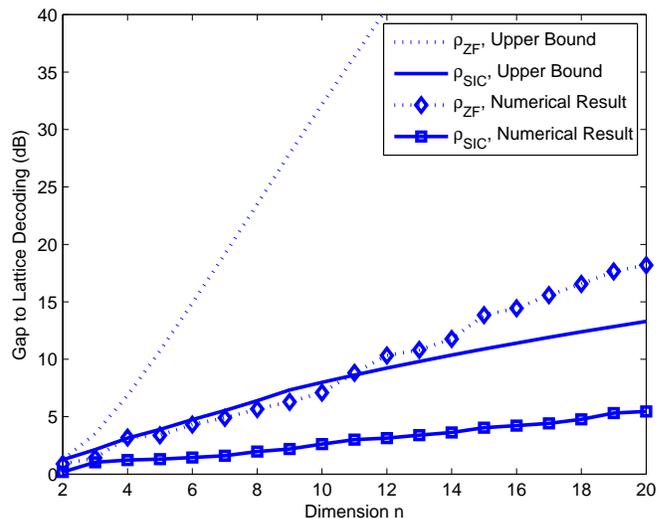,width=10cm}}

\caption{Comparison of the theoretic upper bounds and numerical results of the proximity factors
for KZ reduction.}


\label{fig:PFKZ}
\end{figure}

In summary, we have shown that lattice reduction-aided decoding is in proximity to (infinite)
lattice decoding. We have derived analytic bounds on proximity factors that quantify the worst-case
gap. The same diversity order as that of lattice decoding in MIMO fading channels comes as a direct
consequence. According to the analysis, dual lattice reduction could give smaller upper bounds,
especially in ZF where performance gain has been observed in computer simulation.

Finally, we point out several open issues of lattice reduction-aided decoding.

Firstly, there is room to tighten the bounds derived in this paper. In Table \ref{tab:comparison},
the bounds are tight for Gauss reduction and for KZ-SIC with primal reduction. The bound is very
likely to be tight for KZ-SIC with dual reduction; it will be tight as long as the Hermite constant
is an increasing function which is very likely to be true. The bound is quite good for KZ-ZF with
dual reduction (cf. Fig. \ref{fig:dual-KZ}). However, the bounds are not tight for LLL reduction
and KZ-ZF with primal reduction. For LLL reduction, we have applied the trivial bound
$\lambda^2(L)/\|\mathbf{b}_i\|^2 \leq 1$ or $\lambda^2(L)/\|\mathbf{b}_1\|^2 \leq 1$. But this is
likely to loosen the bound, since $\beta^{-(n-1)} \leq \lambda^2_i(L)/\|\mathbf{b}_i\|^2$ and
$\beta^{-(n-1)} \leq \lambda^2(L)/\|\mathbf{b}_1\|^2$ for LLL reduction \cite{LLL}.

Secondly, we emphasize that the worst-case bounds should be carefully interpreted. The average-case
performance may be a more meaningful measure if the basis $\mathbf{B}$ is random. A smaller
proximity factor does not necessarily guarantee better average performance. For example, KZ
reduction must have smaller approximation factors than LLL reduction, but their BER performance
appears to be very close in MIMO fading channels \cite{Windpassinger2}. Hence an average-case
analysis will complement the worse-case analysis. In simulations of LLL reduction, we observed that
the SNR gap (in dB) indeed widens linearly with the dimension, but at a lower slope. This behavior
is consistent with the common belief that in practice LLL reduction performs better than the
theoretic upper bounds. Recently, Nguyen and Stehl\'e showed that the average behavior of LLL
reduction is still exponential, but at a slower rate \cite{Nguyen1}.

Thirdly, the analysis of dual reduction naturally raises the issue of simultaneously reducing the
primal and dual bases. Methods of simultaneous reduction exist, notably those of Seysen
\cite{Seysen1} and Howgrave-Graham \cite{Howgrave-GrahamIso}.

\appendices


\section{The Angle $\theta_i$ of a Basis}\label{appendix:theta}

%
%
%
%

We derive a formula for angel $\theta_i$ of a basis $\mathbf{B}$, which is used to compute the
distance of ZF. Following Babai \cite{Babai}, let $\mathbf{a}=\sum_{j \neq i}\alpha_j\mathbf{b}_j$
($\alpha_j \in \mathbb{R}$ for $j \neq i$) be an arbitrary vector on the hyperplane spanned by
vectors $\mathbf{b}_1, \cdots, \mathbf{b}_{i-1}, \mathbf{b}_{i+1}, \cdots, \mathbf{b}_n$. The
vector $\mathbf{z} = \mathbf{a}-\mathbf{b}_i$ will attain its minimum length if and only if it is
perpendicular to the hyperplane. We express $\mathbf{z}$ in terms of the Gram-Schmidt vectors as
$\mathbf{z} = \sum_{j=1}^n{r_j \mathbf{\hat{b}}_j}$. Set $\alpha_i = -1$. Then we have the
expression \cite{LLL}
\[r_j = \alpha_j + \sum_{t=j+1}^n{\alpha_t \mu_{t,j}}.\]
We can see
\begin{equation}\label{AngleDistance}
    \|\mathbf{b}_i\|^2 \sin^2 \theta_i = \min \|\mathbf{z}\|^2 = \min
\sum_{j=1}^n{r_j^2\|\mathbf{\hat{b}}_j\|^2}
\end{equation}
where the minimum is taken over the coefficients $\alpha_j$. It is not difficult to see that $r_j =
0$ for $j<i$ if $\mathbf{z}$ is orthogonal to the hyperplane. Accordingly, we may exclude the $j <
i$ terms, to yield
\begin{equation}\label{BabaiAngle}
    \|\mathbf{b}_i\|^2 \sin^2 \theta_i = \min \sum_{j=i}^n{r_j^2\|\mathbf{\hat{b}}_j\|^2}.
\end{equation}
This was also done by Babai \cite{Babai}. Here we have explicitly shown that excluding the $j < i$
terms does not weaken the bound.

For notational convenience, define ${\bm \alpha} = [\alpha_i,\cdots, \alpha_n]^T$, $\mathbf{r} =
[r_i,\cdots, r_n]^T$, $\mathbf{M}$ be $(n-i+1) \times (n-i+1)$ bottom-right submatrix of
$\bm{\mu}$, ${\bm \Lambda}$ be the diagonal matrix $\text{diag}(\|\mathbf{\hat{b}}_i\|^2,
\|\mathbf{\hat{b}}_{i+1}\|^2, \cdots, \|\mathbf{\hat{b}}_n\|^2)$, and $\mathbf{A}=\mathbf{M}
\bm{\Lambda} \mathbf{M}^T$. We have the expressions $\mathbf{r} = \mathbf{M}^T\bm{\alpha}$ and
\[\sum_{j=i}^n{r_j^2 \|\mathbf{\hat{b}}_j\|^2} = \mathbf{r}^T\bm{\Lambda}\mathbf{r} = \bm{\alpha}^T \mathbf{A} \bm{\alpha}.\] We want to find the minimum
of $\bm{\alpha}^T \mathbf{A} \bm{\alpha}$ subject to the constraint $\alpha_i = -1$. This can be
achieved by using the Lagrangian multiplier method. Define the objective function
\[f = \bm{\alpha}^T \mathbf{A} \bm{\alpha} + s(\alpha_i+1).\]
Nulling the partial derivative of $f$ with respect to $\bm{\alpha}$, we have
\[2\mathbf{A} \bm{\alpha} + s\mathbf{e}_1 = 0\]
where the unit vector $\mathbf{e}_1$ has all zero elements except that the first is one. From this
we obtain $\bm{\alpha} = -\frac{s}{2}\mathbf{A}^{-1}\mathbf{e}_1$. Furthermore, since we know
$\alpha_i = -\frac{s}{2}(\mathbf{A}^{-1})_{1,1} = -1$, we have $s = 2/(\mathbf{A}^{-1})_{1,1}$.
Therefore,
\begin{equation}\label{}
    f_{min} = \frac{s^2}{4} \mathbf{e}_i^T\mathbf{A}^{-1}\mathbf{e}_i = \frac{s^2}{4}
    (\mathbf{A}^{-1})_{1,1} = \frac{1}{(\mathbf{A}^{-1})_{1,1}},
\end{equation}
and from (\ref{AngleDistance}) we deduce
\begin{equation}\label{sintheta}
     \sin^2 \theta_i= \frac{1}{\|\mathbf{b}_i\|^2 (\mathbf{A}^{-1})_{1,1}}.
\end{equation}
We have expressed $ \theta_i$ in terms of $\mathbf{b}_i$ and the Gram-Schmidt orthogonalization of
$\mathbf{B}$ . Given $\mathbf{B}$, the angle $\theta_i$ can easily be calculated.

%
%
\section{Proof of Lemma \ref{lem:lemma1}}\label{appendix:Newgamma}

When deriving a lower bound on $\theta_i$, Babai loosened the bound in several steps. It is natural
to ask if his bound can be improved. In Appendix~\ref{appendix:theta}, we have seen that excluding
the $j<i$ terms from (\ref{AngleDistance}) does not weaken the bound. Babai showed \cite{Babai}
\begin{equation}\label{Babaigamma}
    \sum_{j=i}^n{r_j^2 \|\mathbf{\hat{b}}_j\|^2} \geq (2/9)^{n-i}\|\mathbf{\hat{b}}_i\|^2.
\end{equation}
Here we derive a better lower bound by using a different approach. Basically, we examine the
maximum of the first element of matrix $\mathbf{A}^{-1} =
\mathbf{M}^{-T}\bm{\Lambda}^{-1}\mathbf{M}^{-1}$ for a size-reduced (lower-triangular) matrix
$\mathbf{M}$. To do so, we need the following lemma:

\begin{lem}\label{lem:lemma4}
The absolute value of the ($i,j$)-th entry of the inverse $\mathbf{M}^{-1}$ of an $l \times l$
size-reduced matrix $\mathbf{M}$ is not greater than
\begin{equation}\label{muinverse}
  \frac{1}{3}\cdot \left(\frac{3}{2}\right)^{i-j}, \quad 1 \leq j < i \leq l,
\end{equation}
and this is achieved when all off-diagonal elements of $\mathbf{M}$ are equal to $-1/2$.
\end{lem}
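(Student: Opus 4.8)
The plan is to exploit the fact that a size-reduced matrix $\mathbf{M}$ is unit lower-triangular with every strictly-lower entry bounded by $1/2$ in absolute value, and to read off the entries of $\mathbf{M}^{-1}$ from the forward-substitution recursion. Writing the diagonal of $\mathbf{M}$ as ones and its sub-diagonal entries as $\mu_{i,k}$ with $|\mu_{i,k}|\le 1/2$, the identity $\mathbf{M}\mathbf{M}^{-1}=\mathbf{I}$ gives, for $i>j$,
\begin{equation}
(\mathbf{M}^{-1})_{i,j} = -\sum_{k=j}^{i-1}\mu_{i,k}\,(\mathbf{M}^{-1})_{k,j}.
\end{equation}
I would then set $P_{i,j}\triangleq|(\mathbf{M}^{-1})_{i,j}|$, record the boundary value $P_{j,j}=1$, and use $|\mu_{i,k}|\le 1/2$ to obtain the scalar majorization $P_{i,j}\le \tfrac{1}{2}\sum_{k=j}^{i-1}P_{k,j}$.

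This reduces the lemma to controlling the growth of a one-sided recursion in the single variable $m\triangleq i-j$. The cleanest route is to introduce the extremal sequence $S_m$ defined by $S_0=1$ and $S_m=\tfrac{1}{2}\sum_{t=0}^{m-1}S_t$, which dominates $P_{i,j}$ term-by-term by an easy induction. Differencing consecutive members telescopes the partial sums: for $m\ge 2$ one has $S_m-S_{m-1}=\tfrac{1}{2}S_{m-1}$, hence $S_m=\tfrac{3}{2}S_{m-1}$, while $S_1=\tfrac{1}{2}$. This yields $S_m=\tfrac{1}{2}(3/2)^{m-1}=\tfrac{1}{3}(3/2)^m$, which is exactly the asserted bound $\tfrac{1}{3}(3/2)^{i-j}$. (Equivalently, I could verify the closed form directly by induction on $m$, closing the step with the geometric sum $\sum_{t=1}^{m-1}(3/2)^t=2(3/2)^m-3$.)

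For the tightness claim I would take every sub-diagonal entry of $\mathbf{M}$ to equal $-1/2$. The recursion then loses its minus sign and reads $(\mathbf{M}^{-1})_{i,j}=\tfrac{1}{2}\sum_{k=j}^{i-1}(\mathbf{M}^{-1})_{k,j}$; since $(\mathbf{M}^{-1})_{j,j}=1>0$, a trivial induction shows all the relevant entries are positive, so the majorization holds with equality and the value $\tfrac{1}{3}(3/2)^{i-j}$ is attained.

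The one genuinely delicate point is that the base of the exponential must come out as $3/2$ and not the $2$ produced by a crude estimate such as $P_{i,j}\le\tfrac{1}{2}(i-j)\max_{k}P_{k,j}$. Everything hinges on the telescoping identity $S_m=\tfrac{3}{2}S_{m-1}$, i.e.\ on the geometric sum closing the inductive step with no slack; getting that constant sharp (and, secondarily, keeping the signs straight in the equality case) is the part to handle with care, whereas the rest is routine.
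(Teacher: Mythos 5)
Your proof is correct and takes essentially the same route as the paper's: the forward-substitution recursion $(\mathbf{M}^{-1})_{i,j}=-\sum_{k=j}^{i-1}\mu_{i,k}(\mathbf{M}^{-1})_{k,j}$ is exactly the relation the paper extracts via its partitioned-inverse induction on the dimension, and both arguments close by solving the same extremal recursion with the all-$(-1/2)$ matrix certifying tightness. The only cosmetic difference is that you evaluate the recursion by telescoping ($S_m=\tfrac{3}{2}S_{m-1}$) where the paper computes the geometric sum $\tfrac{1}{2}\bigl[1+\sum_{t}\tfrac{1}{3}(3/2)^{t}\bigr]$ explicitly.
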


\begin{proof}
We prove it by induction on the dimension $l$. Note that $\mathbf{M}^{-1}$ is lower triangular,
with unit diagonal elements.

Lemma \ref{lem:lemma4} is obviously true when $l=2$, since the off-diagonal element of
$\mathbf{M}^{-1}$ is $-m_{2,1}$, and the maximum $1/2$ is achieved when $m_{2,1}=-1/2$.

Suppose Lemma \ref{lem:lemma4} is true for $l = k-1$. When $l=k$, we partition the matrix
$\mathbf{M}$ into the form
\[ \mathbf{M}=
  \begin{bmatrix}
    {\mathbf{M}_{k-1}} & {\mathbf{0}} \\
    {\mathbf{h}^T} & {1}
  \end{bmatrix}.
\]
Using the formula for the inverse of a partitioned matrix \cite{Horn}, $\mathbf{M}^{-1}$ can be
expressed as
\[ \mathbf{M}^{-1}=
  \begin{bmatrix}
    {\mathbf{M}_{k-1}^{-1}} & {\mathbf{0}} \\
    {-\mathbf{h}^T\mathbf{M}_{k-1}^{-1}} & {1}
  \end{bmatrix}.
\]
Here, the $(i,j)$-th element of $\mathbf{M}_{k-1}^{-1}$ is given by (\ref{muinverse}) for $1 \leq j
< i \leq k-1$. Since they are all positive, each element of $-\mathbf{h}^T\mathbf{M}_{k-1}^{-1}$ is
maximized when $\mathbf{h}$ is an all $-1/2$ vector.

Consider the first element of $-\mathbf{h}^T\mathbf{M}_{k-1}^{-1}$, given by
\[\frac{1}{2}\left[1+\sum_{j=1}^{k-2}\frac{1}{3}\cdot \left(\frac{3}{2}\right)^{j}\right] = \frac{1}{3}\cdot \left(\frac{3}{2}\right)^{k-1},\]
which verifies (\ref{muinverse}). The other elements of $-\mathbf{h}^T\mathbf{M}_{k-1}^{-1}$ can be
verified in the same way.
\end{proof}

Denote by $\mathbf{p}$ the first column of $\mathbf{M}^{-1}$. Then we have
\begin{equation}\label{A11}
\begin{split}
    (\mathbf{A}^{-1})_{1,1} &= \mathbf{p}^T \bm{\Lambda}^{-1} \mathbf{p}\\
&\leq \|\mathbf{\hat{b}}_i\|^{-2}+\sum_{j=1}^{n-i} \left(\frac{1}{3}\right)^{2} \left(\frac{3}{2}\right)^{2j} \|\mathbf{\hat{b}}_{i+j}\|^{-2}.\\
\end{split}
\end{equation}
Note that the only condition for (\ref{A11}) is that the basis is size-reduced, i.e., $|\mu_{i,j}|
\leq 1/2$ for $i < j$. The choice of the parameter $\delta$ would have no effect on this condition.

For an LLL-reduced basis, $\|\mathbf{\hat{b}}_{i+j}\|^{2} \geq
\beta^{-j}\|\mathbf{\hat{b}}_i\|^{2}$. Hence,
\begin{equation}\label{A11LLL}
\begin{split}
    (\mathbf{A}^{-1})_{1,1} &\leq \left[1+\sum_{j=1}^{n-i} \left(\frac{1}{3}\right)^{2} \left(\frac{3}{2}\right)^{2j} \beta^j\right]\|\mathbf{\hat{b}}_i\|^{-2}\\
& = \left[\frac{\beta}{9\beta-4}\left(\frac{9\beta}{4}\right)^{n-i} +
\frac{8\beta-4}{9\beta-4}\right]\|\mathbf{\hat{b}}_i\|^{-2},
\end{split}
\end{equation}
and correspondingly,
\begin{equation}\label{Newgamma}
    \sum_{j=i}^n{r_j^2 \|\mathbf{\hat{b}}_j\|^2} \geq \left[\frac{\beta}{9\beta-4}\left(\frac{9\beta}{4}\right)^{n-i} + \frac{8\beta-4}{9\beta-4}\right]^{-1}\|\mathbf{\hat{b}}_i\|^2.
\end{equation}
When $\beta=2$, the new bound is asymptotically tighter by a factor of 7 than Babai's lower bound
(\ref{Newgamma}).

Substituting (\ref{vectorlength}) and (\ref{A11LLL}) into (\ref{sintheta}), we obtain Lemma
\ref{lem:lemma1}. Note that we have also incorporated the better estimate (\ref{vectorlength}) of
the length of $\mathbf{b}_i$.

\section{Lower Bound for Dual Size Reduction}\label{appendix:dual}
We derive the counterparts of (\ref{A11}) and (\ref{Newgamma}) when the reversed dual basis is
size-reduced. The notation in Appendix \ref{appendix:Newgamma} is followed. The difference is that,
by the following lemma, all off-diagonal elements of $\mathbf{M}^{-1}$ lie in interval
$[-1/2,1/2]$.

\begin{lem}\label{lem:lemma3}
If the dual basis is size-reduced, then the off-diagonal elements of $\bm{\mu}^{-1}$ lie in the
interval $[-1/2, 1/2]$.
\end{lem}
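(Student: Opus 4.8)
The claim of Lemma~\ref{lem:lemma3} is that if the dual basis $\mathbf{B}^*$ is size-reduced, then the off-diagonal entries of $\bm{\mu}^{-1}$ lie in $[-1/2,1/2]$. The natural plan is to exploit the key relation from Proposition~\ref{prop1}, namely $(\bm{\mu}^*)^T = \mathbf{J}\bm{\mu}^{-1}\mathbf{J}$, which says that $\bm{\mu}^{-1}$ is obtained from $(\bm{\mu}^*)^T$ by flipping it about the anti-diagonal (equivalently, $\bm{\mu}^{-1}$ is a relabeling of the entries of $\bm{\mu}^*$). I would therefore translate the hypothesis on $\mathbf{B}^*$ into a statement about $\bm{\mu}^*$ and then read off the consequence for $\bm{\mu}^{-1}$.

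**The main step.** The hypothesis ``$\mathbf{B}^*$ is size-reduced'' means precisely that $|\mu^*_{i,j}| \le 1/2$ for all $1 \le j < i \le n$, i.e.\ the off-diagonal entries of $\bm{\mu}^*$ are bounded by $1/2$ in absolute value. By Proposition~\ref{prop1}, $\bm{\mu}^{-1} = \mathbf{J}(\bm{\mu}^*)^{T}\mathbf{J}$ (using $\mathbf{J}=\mathbf{J}^T$ and $\mathbf{J}\mathbf{J}=\mathbf{I}$). Conjugation by $\mathbf{J}$ sends the $(i,k)$ entry of a matrix to the $(n-i+1,\,n-k+1)$ entry, so the entries of $\bm{\mu}^{-1}$ are exactly the entries of $(\bm{\mu}^*)^T$ read in reversed index order; each off-diagonal entry of $\bm{\mu}^{-1}$ equals $\pm$ some off-diagonal entry $\mu^*_{i,j}$ and is therefore bounded by $1/2$. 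This is the whole argument: the anti-diagonal flip is a bijection between off-diagonal positions that preserves the magnitude of each entry, so the size-reduction bound on $\bm{\mu}^*$ transfers verbatim to $\bm{\mu}^{-1}$.

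**Where to be careful.** The one point that needs attention is bookkeeping with the index transformation, to confirm that the flip indeed maps off-diagonal positions to off-diagonal positions (it does, since $\mathbf{J}$-conjugation preserves the diagonal as a set) and that the \emph{transpose} in $(\bm{\mu}^*)^T$ does not spoil the bound. Since $\bm{\mu}^*$ is lower-triangular with unit diagonal, $(\bm{\mu}^*)^T$ is upper-triangular, and $\mathbf{J}(\bm{\mu}^*)^T\mathbf{J}$ is again lower-triangular with unit diagonal, consistent with $\bm{\mu}^{-1}$ having the form displayed in the proof of the dual LLL-SIC theorem; I would verify this triangular structure as a sanity check. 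I expect the only mild obstacle to be stating the index map $(i,k)\mapsto(n-i+1,\,n-k+1)$ cleanly and confirming it respects the strict triangular inequalities, but no real estimation is involved—the result is essentially a symmetry consequence of Proposition~\ref{prop1} together with the definition of size reduction. I would conclude that each off-diagonal entry of $\bm{\mu}^{-1}$ coincides in magnitude with a corresponding off-diagonal entry of the size-reduced $\bm{\mu}^*$, hence lies in $[-1/2,1/2]$.
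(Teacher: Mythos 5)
Your proposal is correct and follows essentially the same route as the paper: both rest on Proposition~\ref{prop1}, rewritten as $\bm{\mu}^{-1} = \mathbf{J}(\bm{\mu}^{*})^{T}\mathbf{J}$, together with the observation that transposition and conjugation by the permutation matrix $\mathbf{J}$ merely rearrange entries (your cautionary ``$\pm$'' is in fact unnecessary, since $\mathbf{J}$-conjugation permutes entries without sign changes). The paper states this in one line; your write-up simply makes the index bookkeeping explicit.
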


\begin{proof}
By definition, the off-diagonal elements of $\bm{\mu}^*$ lie in $[-1/2, 1/2]$ if the reversed dual
basis $\mathbf{B}^*$ is size-reduced. By Proposition \ref{prop1}, $\bm{\mu}^{-1} =
\mathbf{J}(\bm{\mu}^{*})^T\mathbf{J}$. This proves Lemma \ref{lem:lemma3}.
\end{proof}

Consequently,
\begin{equation}\label{A11-dual}
\begin{split}
    (\mathbf{A}^{-1})_{1,1} &= \mathbf{p}^T \bm{\Lambda}^{-1} \mathbf{p}\\
&\leq \|\mathbf{\hat{b}}_i\|^{-2} + \frac{1}{4}\sum_{j=1}^{n-i} \|\mathbf{\hat{b}}_{i+j}\|^{-2},
\end{split}
\end{equation}
and
\begin{equation}\label{generalization-dual}
    \sum_{j=i}^n{r_j^2 \|\mathbf{\hat{b}}_j\|^{2}} \geq \left(\|\mathbf{\hat{b}}_i\|^{-2} + \frac{1}{4}\sum_{j=1}^{n-i} \|\mathbf{\hat{b}}_{i+j}\|^{-2}\right)^{-1}.
\end{equation}

In contrast, when the primal basis is size-reduced, the off-diagonal elements of $\mathbf{M}^{-1}$
can only be bounded by (\ref{muinverse}), which can be much larger than $1/2$. This analysis
clearly shows that it is the dual basis, rather than the primal basis, that should be reduced for
the purpose of ZF.

\section*{Acknowledgment}
The author would like to thank W. H. Mow and N. Howgrave-Graham for helpful discussions. Thanks are
also due to the anonymous reviewers for useful comments that led to improved presentation of this
paper.


\footnotesize
\bibliographystyle{IEEEtran}
\bibliography{IEEEabrv,lingbib}

%





\end{document}